\def\section{\@startsection{section}{1}%
\z@{1\linespacing\@plus\linespacing}{1\linespacing}%
{\bf\centering}}
\def\subsection{\@startsection{subsection}{0}%
\z@{\linespacing\@plus\linespacing}{\linespacing}%
{\bf}}
\def\subsubsection{\@startsection{subsubsection}{0}%
\z@{\linespacing\@plus\linespacing}{\linespacing}%
{\bf}}
\newtheorem{theorem}{Theorem}[section]
\newtheorem{corollary}[theorem]{Corollary}
\theoremstyle{definition}
\newtheorem{assumption}[theorem]{Assumption}
\begin{document}
\title
{Fractional binomial regression model for count data with excess zeros }
\author{Jeonghwa Lee \and Chloe Breece }

\address{ University of North Carolina Wilmington, USA}

\thanks{  Email address:  J. Lee (corresponding author): leejb@uncw.edu, C. Breece: ceb1386@uncw.edu. 
\\
\emph{Keywords}: Zero-inflated regression models, count data with excess zero, fractional binomial distribution.}

\begin{abstract}
 This paper proposes a new generalized linear model with the fractional binomial distribution. 
 Zero-inflated Poisson/negative binomial distributions are used for count data with many zeros. To analyze the association of such a count variable with covariates, zero-inflated Poisson/negative binomial regression models are widely used. In this work, we develop a regression model with the fractional binomial distribution that can serve as an additional tool for modeling the count response variable with covariates. The consistency of maximum likelihood estimators of the proposed model is investigated theoretically and empirically with simulations. The practicality of the proposed model is examined through data analysis. The results show that our model is as versatile as or more versatile than the existing zero-inflated models, and especially, it has a better fit with left-skewed discrete data than other models. However, the proposed model faces computational obstacles and will require more work in the future to implement this model on various count data with excess zeros. 
\end{abstract}
\maketitle
\baselineskip 0.5 cm

\bigskip \medskip



\section{Introduction}
Count data with excess zeros are often found in many areas, including public health, epidemiology, ecology, finance, and quality control, to name a few.
 In the past, several zero-adjusted discrete models have been developed for such a data set \cite{Gup}. When there is additional information from covariates and one wishes to analyze the association of covariates with a count response variable with many zeros, then the zero-inflated Poisson/negative binomial regression model can be used \cite{Gre, Hei, Lam, Mul}. In \cite{Hal}, the zero-inflated binomial regression model was developed for the upper bounded count response.  
These zero-inflated models are mixed models in which structural zeros are modeled separately and added to a regular count model that generates sampling zeros and other count values.  

In this work, we propose a new model that can serve as an additional tool for a regression model for count data with excess zeros. This model, which we will call a fractional binomial (FB) regression model, is developed from the fractional binomial distribution first introduced in \cite{Lee}. The FB regression model takes a different approach than the aforementioned zero-inflated models to modeling count data with excess zeros. It does not treat sampling zeros and structural zeros separately, yet it is as flexible as or, at times, more flexible than the conventional zero-inflated models so that it can fit count data of various shapes.   The FB regression is based on the fractional binomial distribution which is derived from correlated Bernoulli variables. It is this dependency among the Bernoulli variables that results in overdispersion and zero-inflation in the discrete model, and it also results in more flexible shapes of the distribution than the conventional discrete distributions. 

The fractional binomial distribution is defined by the sum in a stationary binary sequence, called a generalized Bernoulli process (GBP), whose covariance function decays by power law \cite{Lee}. The applications of GBP and the fractional binomial distribution are found in seismology, economics, and horticulture \cite{Lee3, Lee2}. The mean of the fractional binomial distribution is the same as that of the regular binomial distribution since it is defined from a stationary sequence, but the variance of the fractional distribution is larger than that of the regular binomial distribution due to the dependence among the Bernoulli variables. Especially, under certain conditions, its variance is proportional to the length of the binary sequence to a fractional power.   The shape of the fractional binomial distribution varies with parameters that affect the center, spread, skewness, and zero-inflation of the distribution. 
 Using a link function, one can incorporate covariates to connect them to the shape of the distribution of the response variable. 

We investigate the consistency of the maximum likelihood (ML) estimator of the FB regression model theoretically and empirically with simulations. We also examine the applicability of the FB model through data analysis and assess the strengths and weaknesses of the proposed model. Our results show that when data size is small with a few covariates (approximately less than five covariates) and a small or moderate bound on the maximum response (approximately less than 20), the FB model can perform as well as or better than the existing zero-inflated models. Specifically, the FB model is found to have a better fit than other models when the count response has a left-skewed distribution.  However, the FB model has computational challenges; it suffers from computational instability in finding ML estimators and is not suitable for large datasets. 

In Section 2, the zero-inflated regression models are reviewed. In Section 3, the FB regression model is introduced and the consistency of the ML estimators is investigated, followed by simulations to examine the performance of the estimators in Section 4. In Section 5, the zero-inflated regression models and the FB model are fitted to two datasets, their fits are compared by AIC, and the goodness of fit of the models is checked through randomized quantile residuals.  The conclusion and discussion are followed in Section 6.

\section{Zero-inflated regression models}
Zero-inflated (ZI) regression models are mixed models of regular regression models for discrete data with a nondegenerate probability distribution at zero. 
Let $y_i\in \mathbb{N}\cup\{0\}$ be a response variable and $(x_{1i}, x_{2i},\cdots, x_{ki} )\in \mathbb{R}^k $ be covariates for
$i=1,2,\cdots,n$. 
Let $ {\bf x}_i=(1,x_{1i}, x_{2i},\cdots, x_{ki} ),$
$ {\boldsymbol \gamma}=(\gamma_0,\gamma_1,\gamma_2,\cdots,\gamma_k) , $ $ {\boldsymbol \tau}=(\tau_0,\tau_1,\tau_2,\cdots,\tau_k),$ and 
$ {\boldsymbol \beta}=(\beta_0,\beta_1,\beta_2,\cdots,\beta_k).$    The probability distributions of the zero-inflated binomial/Poisson/negative binomial (ZIB, ZIP, ZINB) regression models are expressed as
\[ P(y_i|{\bf x}_i, \Theta )= \begin{dcases}
\pi_i+(1-\pi_i)g(0| {\bf x}_i) &\text{if } y_i=0,\\ 
(1-\pi_i)g(y_i| {\bf x}_i )&\text{if } y_i\in \mathbb{N},
\end{dcases}
\]
where
\[ \pi_i=\frac{1}{1+\exp(-{\bf x}_i'{\boldsymbol \gamma} )}, \]
and
\[ g(y_i| {\bf x}_i )= \begin{dcases} {N \choose y_i}p_i^{y_i} (1-p_i)^{N-y_i}&\text{for ZIB regression model,}\\ \frac{e^{-\mu_i} \mu_i^{y_i}}{y_i!}&\text{for ZIP regression model,}
\\  \frac{ \theta^{\theta} \mu_i^{y_i} \Gamma(\theta+y_i)}{\Gamma(\theta) \Gamma(y_i+1)(\theta+\mu_i)^{\theta+y_i }}
 &\text{for ZINB regression model,}
\end{dcases} \]
where
\[ p_i=\frac{1}{1+\exp(-{\bf x}_i'{\boldsymbol \tau} )} \text{ and  }  \mu_i=\exp({\bf x}_i'{\boldsymbol \beta} ).\] In \cite{Hal}, the ZIB model was developed for count response that has an upper bound. But for the purpose of describing data with a probability model without making a statistical inference beyond the range of dataset, all zero-inflated models can be used regardless of whether the response variable has a theoretical upper bound or not. Therefore, in this paper, we use all zero-inflated models and develop the FB model for both bounded and unbounded count data. 
Here, we denote by $N$ the maximum value of the response in a dataset or the upper bound of the response variable if it has one. The ZIB we use in this paper is a simpler version of that in \cite{Hal}, as we have a constant upper bound $N$ while in \cite{Hal} the upper bound of the response varies and is denoted by $n_i$.

Note that $g$ is the probability distribution of standard count regression models where $\mu_i$ is the conditional mean given the covariates in the case of Poisson and negative binomial regression models, and  $\pi_i$ gives the additional probability at zero. In the Poisson regression model, given the values of covariates, the conditional mean  and the conditional variance   are the same as $\mu_i.$
In the negative binomial regression model, the conditional variance is $\mu_i+\mu_i^2/\theta,$ therefore, the negative binomial model can serve for overdispersed count data. We can also generalize the dispersion parameter $\theta$ in ZINB  and replace $\theta$ by
\[\theta_i=\exp({\bf x}_i'{\boldsymbol \alpha} ),\] where $ \boldsymbol \alpha=(\alpha_0,\alpha_1,\cdots,\alpha_k)$, and we call it ZINB-2.

Let $\Theta$ denote the set of all parameters in the corresponding model, i.e.,  $\Theta=({\boldsymbol \beta},{\boldsymbol \tau})$ for ZIB,   $\Theta=({\boldsymbol \beta},{\boldsymbol \gamma})$ for ZIP,  $\Theta=({\boldsymbol \beta},{\boldsymbol \gamma},\theta)$ for ZINB,  and $\Theta=({\boldsymbol \beta},{\boldsymbol \gamma},\alpha)$ for ZINB-2.
The parameters of these models are estimated by the maximum likelihood estimation (MLE),
\[ \hat{\Theta}= \operatorname*{arg\,max}_{\Theta \in \Omega} L( \Theta | (y_i, {\bf x}_i), i=1,2,\cdots,n )=\operatorname*{arg\,max}_{\Theta \in \Omega} \prod_{i=1}^n  P(y_i| {\bf x}_i, \Theta)\]
where $\Omega=\mathbb{R}^{dim(\Theta)}.$
The asymptotic properties of the MLE of
zero-inflated models were studied in \cite{Cza, Sta}.
\section{Fractional binomial regression model}

The fractional binomial (FB)
distribution, introduced in \cite{Lee}, is defined by the cumulative sum in a sequence of dependent Bernoulli trials, called a generalized Bernoulli process (GBP).
 The GBP, $\{\xi_i, i\in \mathbb{N}\}$, is a sequence of stationary binary variables such that for any $i\in \mathbb{N},$
 \begin{align*}
 &P(\xi_i=1)=p, P(\xi_i=0)=1-p ,\end{align*} and for any 
 $ 0<i_0<i_1<i_2<\cdots<i_n,$
\begin{equation}
       P(\xi_{i_0}=1, \xi_{i_1}=1, \cdots, \xi_{i_n}=1)=
p\prod_{j=1}^{n}(p+c|i_j-i_{j-1}|^{2H-2}). \end{equation} 
More generally, for any disjoint sets $A,B \subset \mathbb{N},$ the joint probability distribution in GBP is defined by the inclusion-exclusion principle,
\begin{align}
    &P( \cap_{i'\in B }\{\xi_{i'}=0\}  \cap_{i\in A }   \{\xi_{i}=1\} )=\sum_{k=0}^{|B|}\sum_{\substack{B'\subset B\\ |B'|=k}} (-1)^{k}P(\cap_{i\in B'\cup A}\{\xi_{i}=1\} ),
\end{align}
and
\begin{align}
    P(\cap_{i'\in B }\{\xi_{i'}=0\})&=1+\sum_{k=1}^{|B|}\sum_{\substack{B'\subset B\\ |B'|=k}} (-1)^{k}P(\cap_{i\in B'}\{\xi_{i}=1\} ),
\end{align}
with parameters $(p,H,c)$ that satisfy the following assumption.
\begin{assumption}
 $p,H \in (0,1),$ and \[
    0\leq c<\min\{1-p, \frac{1}{2} ( -2p+2^{2H-2} +\sqrt{4p-p2^{2H}+2^{4H-4}})\}.
\] \label{Assumption 2.1}
\end{assumption}

The GBP is a stationary binary sequence with a covariance function
\[ {Cov}(\xi_i, \xi_j)=pc|i-j|^{2H-2}, i\neq j.\] When $H\in(0.5,1),$  the GBP possesses long-range dependence, since $\sum_{i=1}^{\infty} {Cov}(\xi_1, \xi_i)=\infty.$ Fractional binomial random variable, denoted by $B_N(p,H,c)$, is defined as the sum of the first $N$ variables in the GBP. Its mean is $Np$, and the variance is

\begin{align*}
 E((B_N-Np)^2)&\sim\begin{dcases}
    b_1 N  &\text{ if $H\in(0, .5),$}\\
   b_2 N\ln{N}  &\text{ if $H = .5,$} \\
    b_3 N^{2H}  &\text{ if $H\in(.5, 1),$}
\end{dcases}  \end{align*}
where $ b_1=  p(1-p)+2pc/(1-2H), b_2=  2pc, $ and $ b_3= {pc  }/{(H(2H-1))}.$  If $H\in (0.5,1),$ its variance is asymptotically proportional to $N$ to a fractional power.
 When $c=0, $ $B_N(p,H,0)$ becomes the regular binomial random variable whose parameters are $N,p.$ The probability mass function (pmf) of the fractional binomial distribution does not have a closed-form expression, however, it can be computed iteratively through equations (3.1-3.3). 

 Figure 1 shows the pmfs with various sets of parameters. R package ``frbinom" was used to generate the pmfs with $N=30$.  
\begin{figure}[b]
    \centering
    \includegraphics[width=\linewidth]{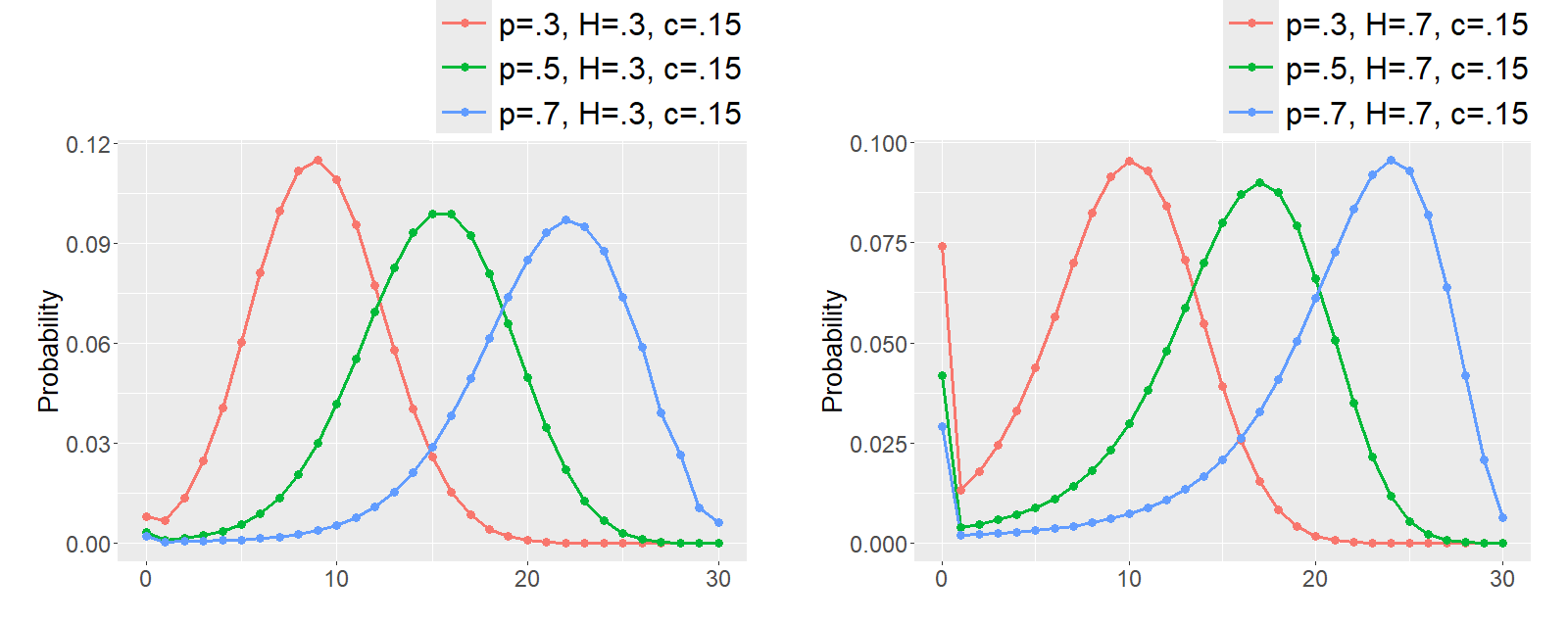}
    \includegraphics[width=\linewidth]{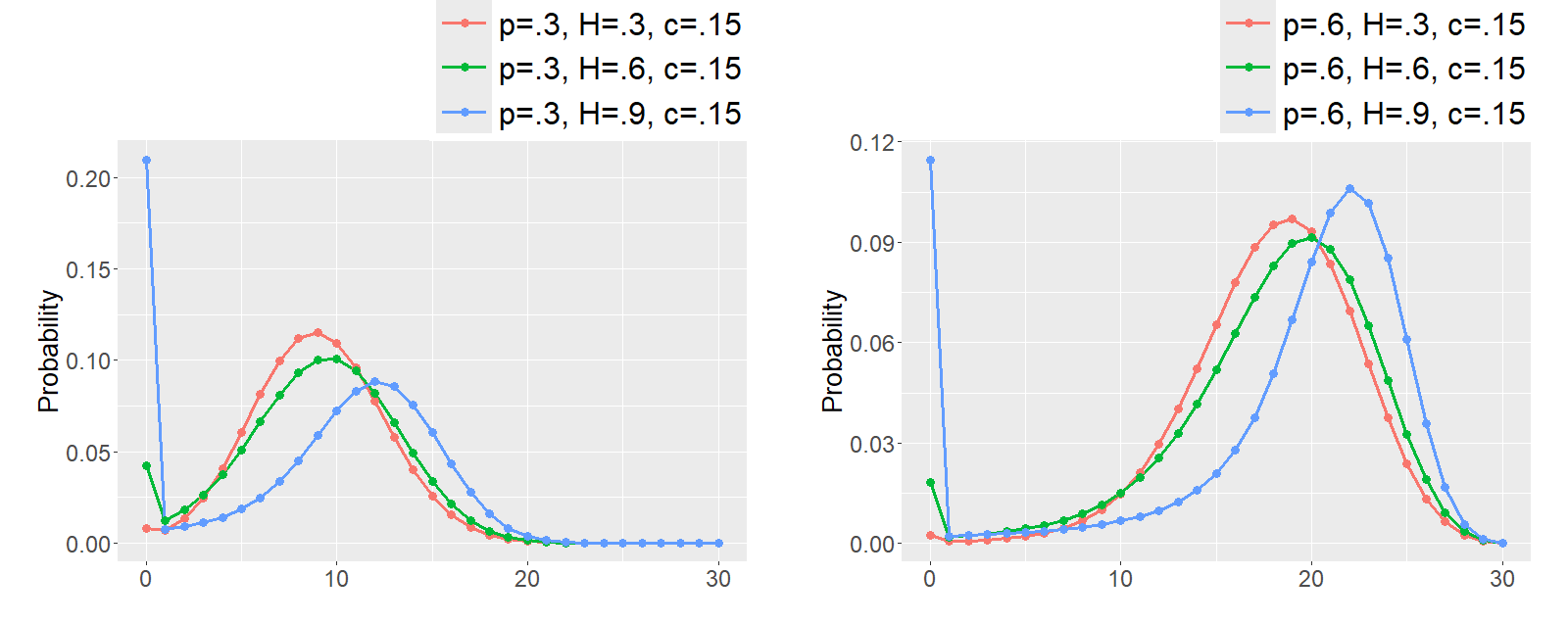}
    \includegraphics[width=\linewidth]{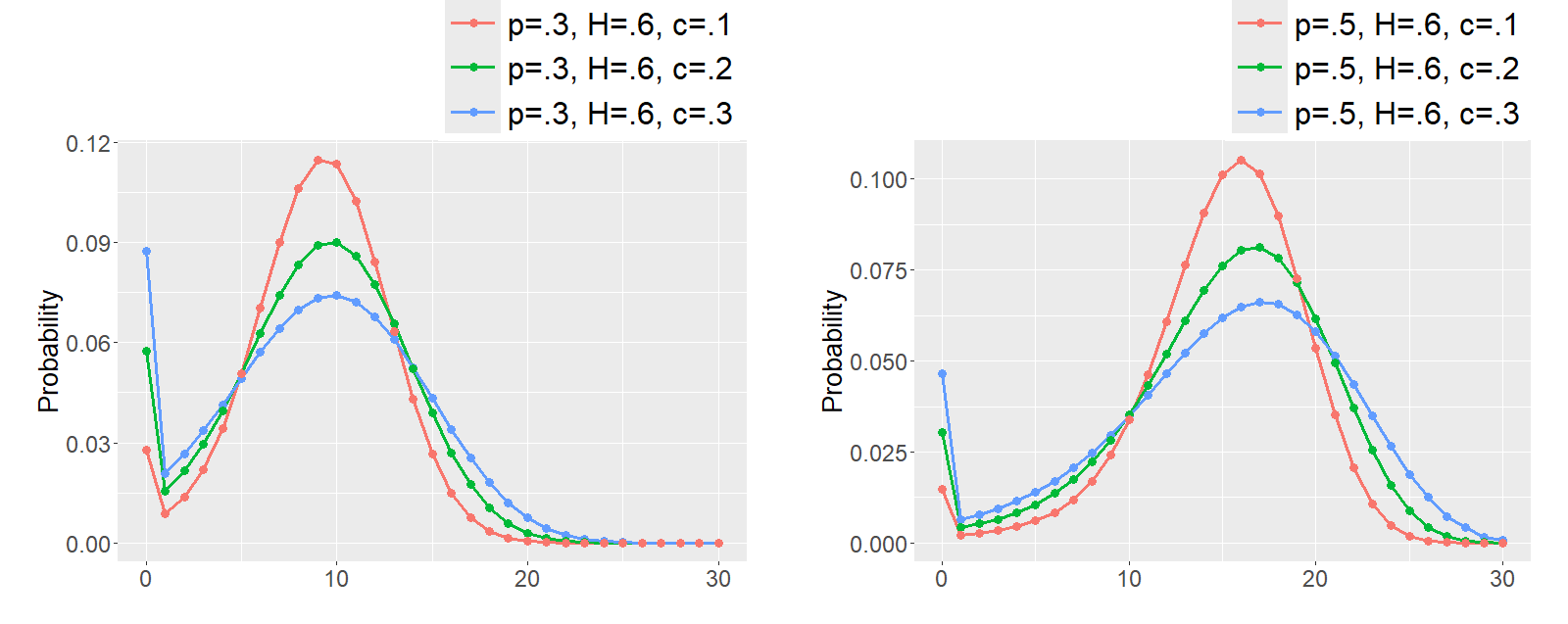}
    \caption{Probability mass function of fractional binomial distribution with changing parameter of p(top), H(middle), c(bottom) }
    \label{fig:enter-label}
\end{figure}
In Figure 1, the plots in the first row show the effect of varying $p$ when other parameters are fixed, which changes the location of the distribution. 
The plots in the middle show that changes in  $H$ affect the zero-inflation and skewness of the distribution. The larger $H$, the greater the zero inflation and skewness of the distribution. 
Note also that the peak at zero starts to appear when $H>.5$. The plots in the bottom row show that the parameter $c$ affects the spread of the distribution. The larger $c$, the larger the spread and the higher the zero inflation.    

 By connecting the parameters with covariates through a logit link, we can define the fractional binomial regression model as follows.
Let  ${\boldsymbol \psi}=(\psi_0,\psi_1,\cdots,\psi_k),$ ${\boldsymbol \eta}=(\eta_0,\eta_1,\cdots,\eta_k),$ and $ {\boldsymbol u}=(u_0, u_1,\cdots,u_k  ) .$ Given covariates ${\bf x}_i \in \mathbb{R}^k, i=1,2,\cdots,n$, the response variable follows the fractional binomial distribution,
\[y_i|{\bf x}_i \sim B_N(p_i, H_i, c_i),\] with
\begin{align}
    p_i=& \frac{1}{1+\exp(-{\bf x}_i' {\boldsymbol \psi} )},\\
    H_i=&\frac{1}{1+\exp(-{\bf x}_i' {\boldsymbol \eta} )}, \\
    c_i= &\frac{1}{1+\exp(-{\bf x}_i' {\boldsymbol u} )}, \nonumber
\end{align}
 and  $({\boldsymbol \psi}, {\boldsymbol \eta}, {\boldsymbol u})\in \Omega_{\{{\bf x}_1, {\bf x}_2,\cdots, {\bf x}_n \}},$ where  $\Omega_{\{{\bf x}_1, {\bf x}_2,\cdots, {\bf x}_n \}}$
is a set of all $({\boldsymbol \psi}, {\boldsymbol \eta}, {\boldsymbol u})\in\mathbb{R}^{3(k+1)}$ such that $(p_i,H_i, c_i)$  satisfies Assumption 3.1  for all $i=1,2,\cdots ,n.$
Like ZIB, the value of $N$ is set as the maximum value that the response variable can take if it has an upper bound, or the maximum response in a dataset. Finding the parameter space $\Omega_{\{{\bf x}_1, {\bf x}_2,\cdots, {\bf x}_n \}}$ of $({\boldsymbol \psi}, {\boldsymbol \eta}, {\boldsymbol u})$  is not an easy task, and one can expect that it might have a complicated form. Instead, we can reparameterize  $(p, H, c)$ in fractional binomial distribution by $(p, H, c^{\circ}),$ where \begin{align*}
 c= c^{\circ}\times  {\min\{1-p, \frac{1}{2} ( -2p+2^{2H-2} +\sqrt{4p-p2^{2H}+2^{4H-4}})\}} ,\end{align*} so that Assumption 3.1 is satisfied with $ (p_i, H_i, c_i^{\circ}) \in (0,1)^3,$ and we let
 \begin{align}
 c_i^{\circ}=\frac{1}{1+\exp(-  {\bf x}_i'\nu)} .\end{align}
 This results in a simple parameter space 
  $\Theta=({\boldsymbol \psi}, {\boldsymbol \eta}, {\boldsymbol \nu}) \in \Omega=\mathbb{R}^{3(k+1)}.$   The parameters $\Theta=({\boldsymbol \psi}, {\boldsymbol \eta}, {\boldsymbol \nu})$   are estimated by MLE,
\[ \hat{\Theta}= \operatorname*{arg\,max}_{\Theta \in \Omega} L( \Theta | (y_i, {\bf x}_i), i=1,2,\cdots,n )=\operatorname*{arg\,max}_{\Theta \in \Omega} \prod_{i=1}^n  P(y_i| {\bf x}_i, \Theta)\]
where $   P(y_i|  {\bf x}_i, \Theta)=P(B_N(p_i,H_i, c_i^{\circ})=y_i),$ and $p_i, H_i, c_i^{\circ}$ are (3.4), (3.5), and (3.6), respectively. As the pmf of the fractional binomial distribution does not have an explicit expression, MLE is obtained through numerical optimization with the  parameter space $\Theta\in \Omega=\mathbb{R}^{3(k+1)}.$ 
As the reparameterization simplifies the parameter space of the FB regression model, we will use this reparameterized model for the rest of the paper.


For the ML estimator to be consistent in the FB regression model, we need the following assumptions on the covariates and parameters.

\begin{assumption}
{\it i)} Covariates are i.i.d. random vectors and bounded, i.e., $ {\bf X}_i \stackrel{i.i.d.}{\sim}g $ and the support set of the distribution $g$ is bounded, ${\bf X }_i=(x_{i1},x_{i2},\cdots, x_{ik}) \in \mathcal{X}$ for all $i=1,2,\cdots$, where  $ \mathcal{X}$ is a compact set in $\mathbb{R}^k$. 
\\{\it ii)} Covariates are linearly independent and not constant, i.e.,   there is no set of constants $(a_1,a_2,\cdots, a_k)\in\mathbb{R}^{k}/\{{\bf 0}\}$ such that $ var( a_1x_{i1}+a_2 x_{i2}+\cdots +a_kx_{ik})=0.$   
\\
{\it iii)} The true parameter value $\Theta_0$ lies in a known compact set, i.e., $\Theta_0 \in  \mathcal{C}$ where $\mathcal{C}$ is a compact set in $\mathbb{R}^{3(k+1)}.$

\end{assumption}

Here, we assume $N$ is known, but if $N$ is not unknown but lies in a known compact set, then one can extend the parameter as $\Theta=( {\boldsymbol \psi}, {\boldsymbol \eta}, {\boldsymbol \nu},N).$ The following result shows the consistency of the ML estimator.

\begin{theorem} Under Assumption 3.2, the ML estimator is  consistent,
\[ \hat \Theta \stackrel{p}{\rightarrow} \Theta_0\] as $n\to \infty.$ 
\end{theorem}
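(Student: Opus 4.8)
The plan is to treat $\hat\Theta$ as an extremum (M-)estimator and to invoke the classical consistency argument for such estimators (Wald's theorem): if the normalized log-likelihood converges uniformly over the compact parameter set to a population objective that is uniquely maximized at $\Theta_0$, then its maximizer converges in probability to $\Theta_0$. Since the covariates $\mathbf{X}_i$ are i.i.d.\ (Assumption 3.2 {\it i}) and each response is drawn from the conditional law given $\mathbf{X}_i$, the pairs $(Y_i,\mathbf{X}_i)$ form an i.i.d.\ sample, so I would work with i.i.d.\ machinery throughout. Write
\[ \ell_n(\Theta) = \frac{1}{n}\sum_{i=1}^n \log P(y_i \mid \mathbf{x}_i, \Theta), \qquad M(\Theta) = \mathbb{E}\big[\log P(Y \mid \mathbf{X}, \Theta)\big], \]
where the expectation is taken under $\Theta_0$.

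First I would establish the analytic regularity of the integrand. Because $\Theta$ ranges over the compact set $\mathcal{C}$ (Assumption 3.2 {\it iii}) and $\mathbf{x}$ over the compact set $\mathcal{X}$, the logit links (3.4)--(3.6) map the bounded linear predictors $\mathbf{x}'{\boldsymbol\psi},\ \mathbf{x}'{\boldsymbol\eta},\ \mathbf{x}'{\boldsymbol\nu}$ into a compact subinterval of $(0,1)$; hence the induced triples $(p_i,H_i,c_i^{\circ})$ lie in a fixed compact set $K\subset(0,1)^3$. On $K$ the recursions (3.1)--(3.3) express each pmf value $P(B_N(p,H,c)=y)$, $y\in\{0,1,\dots,N\}$, as a finite sum of products of the continuous functions $p+c|i_j-i_{j-1}|^{2H-2}$, so the pmf is jointly continuous in $(p,H,c)$ and, being the probability of an attainable outcome for interior parameters, strictly positive and bounded away from zero on $K$. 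Consequently $(y,\mathbf{x},\Theta)\mapsto \log P(y\mid\mathbf{x},\Theta)$ is continuous and bounded on the compact set $\{0,\dots,N\}\times\mathcal{X}\times\mathcal{C}$, and the resulting constant bound is a trivially integrable envelope. The uniform law of large numbers for i.i.d.\ data then yields $\sup_{\Theta\in\mathcal{C}}|\ell_n(\Theta)-M(\Theta)|\stackrel{p}{\rightarrow}0$ together with the continuity of $M$.

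Next I would verify identification, i.e.\ that $M(\Theta)\le M(\Theta_0)$ with equality only at $\Theta_0$. Writing the gap as
\[ M(\Theta_0)-M(\Theta)=\mathbb{E}_{\mathbf{X}}\big[\mathrm{KL}\big(P(\cdot\mid\mathbf{X},\Theta_0)\,\|\,P(\cdot\mid\mathbf{X},\Theta)\big)\big]\ge 0 \]
by Gibbs' inequality, equality forces $P(\cdot\mid\mathbf{x},\Theta)=P(\cdot\mid\mathbf{x},\Theta_0)$ for $g$-almost every $\mathbf{x}$. This reduces identification to two sub-claims: (a) the map $(p,H,c)\mapsto\mathrm{law}(B_N(p,H,c))$ is injective on $K$, so equality of the conditional laws forces $(p_i,H_i,c_i^{\circ})=(p_i^0,H_i^0,c_i^{0\circ})$ a.s.; and (b) since each logit link is a bijection, this gives $\mathbf{X}'{\boldsymbol\psi}=\mathbf{X}'{\boldsymbol\psi}_0$, $\mathbf{X}'{\boldsymbol\eta}=\mathbf{X}'{\boldsymbol\eta}_0$, $\mathbf{X}'{\boldsymbol\nu}=\mathbf{X}'{\boldsymbol\nu}_0$ $g$-a.s., whereupon Assumption 3.2 {\it ii} (no nontrivial linear combination of the covariates is degenerate), together with the built-in intercept, makes the second-moment matrix of $(1,X_1,\dots,X_k)$ nonsingular and forces ${\boldsymbol\psi}={\boldsymbol\psi}_0$, ${\boldsymbol\eta}={\boldsymbol\eta}_0$, ${\boldsymbol\nu}={\boldsymbol\nu}_0$. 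Combining uniform convergence, continuity of $M$, compactness of $\mathcal{C}$, and the unique maximizer $\Theta_0$ then gives $\hat\Theta\stackrel{p}{\rightarrow}\Theta_0$ by the standard argmax consistency argument.

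I expect the main obstacle to be sub-claim (a) of the identification step: because the fractional binomial pmf has no closed form and is defined only through the iteration (3.1)--(3.3), proving that distinct triples $(p,H,c)$ induce distinct distributions of $B_N$ is delicate. A natural route is to exploit the moment and shape information recorded earlier --- the mean $Np$ pins down $p$, after which the explicit covariance $\mathrm{Cov}(\xi_i,\xi_j)=pc|i-j|^{2H-2}$ and the variance behavior separate $H$ and $c$ --- but making this rigorous for fixed finite $N$, rather than only asymptotically in $N$, will require care. Verifying the strict positivity of the pmf uniformly on $K$, which underpins the integrable-envelope argument, is a second and milder technical point.
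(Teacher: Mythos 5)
Your proposal follows essentially the same route as the paper: both treat $\hat\Theta$ as an M-estimator over the compact set $\mathcal{C}$, use compactness of $\mathcal{X}\times\mathcal{C}$ plus continuity and strict positivity of the finite-support pmf to get a bounded (hence integrable) envelope for $\log P(y\mid\mathbf{x},\Theta)$, invoke a uniform law of large numbers (the paper cites Lemma 2.4 and Theorem 2.5 of White), and reduce identification to injectivity of $(p,H,c)\mapsto \mathrm{law}(B_N(p,H,c))$ combined with Assumption 3.2~\emph{ii)} on the covariates. The one step you explicitly leave open --- sub-claim (a), injectivity of the parameter-to-distribution map for \emph{fixed finite} $N$ --- is precisely the step the paper supplies, and it does so without any asymptotics in $N$: from the law of $B_N$ one reads off the exact quantities $E(B_N)=Np$, which recovers $p$; then $P(B_N=N)=p(p+c)^{N-1}$ (immediate from (3.1)), which is strictly increasing in $c$ and so recovers $c$ given $p$; and finally the exact variance $Np(1-p)+\sum_{i\neq j}pc\,|i-j|^{2H-2}$, which for $c>0$ (guaranteed by the logit reparameterization of $c^{\circ}$) and $N\ge 3$ is strictly increasing in $H$ and so recovers $H$. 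Substituting this three-statistic argument for your tentative ``mean, covariance, variance'' sketch closes the gap and makes your proof match the paper's; the remaining ingredients of your write-up (KL/Gibbs inequality for uniqueness of the maximizer, nonsingularity of the second-moment matrix of $(1,X_1,\dots,X_k)$ to pass from equal linear predictors to equal coefficients) are correct and, if anything, spelled out in more detail than in the paper.
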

\begin{proof} Here, we assume that the covariate ${\bf X}$ is a continuous random vector with pdf $g$, but one can get the same result with a discrete random vector. Also, we assume $N$ is known, but the same result follows similarly if it lies in a known compact interval.

Note that the fractional binomial distribution is identifiable, since, by (3.1), we have $E(B_N(p,H,c)) =Np,$
$P(B_N(p,H,c) =N)=p(p+c)^{N-1},$ and 
\[ var(B_N(p,H,c))=Np(1-p)+\sum_{\substack{i,j=1,2,\cdots,n\\ i\neq j}}pc|i-j|^{2H-2}.\] Therefore, $(p,H,c),$ and also $(p,H,c^{\circ})$, uniquely determine the first two moments and the probability of maximum observation, which indicates that the parameters uniquely determine the FB distribution. Combined with Assumption 3.2 {\it ii)}, it follows that  $\Theta$ uniquely determines the FB regression model, therefore, the identifiability of the FB regression model is established.

Under Assumption 3.1,  it follows from Proposition 2.2  in \cite{Lee} that   $P(B_N(p,H,c) =\ell)\neq 0$ for any $\ell=0,1,\cdots,N.$
Note that from (3.1-3.3) the log of the pmf of a fractional binomial distribution is continuous with respect to $(p,H, c^{\circ}),$ and from (3.4-3.6), $(p_i,H_i, c^{\circ})$ is continuous with respect to the parameter $\Theta$ and the covariates ${\bf x}_i.$ Therefore, $\ln P(y_i
| {\bf x}_i, \Theta)$ is continuous in $\Theta$ and  ${\bf x}_i$, and by Assumption 3.2 {\it i,iii},   
\[    \sup_{\Theta \in \mathcal{C}, {\bf x}_i \in \mathcal{X} } |\ln P(y_i| {\bf x}_i, \Theta)|:=D(y_i)  <\infty   , \] and
 $ E(D(y_i))=\int \sum_{y=0}^N P(y|{\bf x}, \Theta_0 )D(y) dg(\bf{x})<\infty. $
By Lemma 2.4 in \cite{Whi} \[  \sup_{\Theta\in \mathcal{C} }|\hat\ell(\Theta)-\ell(\Theta)|\stackrel{p}{\rightarrow} 0  \] where
\[  \hat \ell(\Theta| (y_i, {\bf x}_i ), i=1,2,\cdots, n)=\sum_{i=1}^n \ln P(y_i|{\bf x}_i, \Theta) \] and \[\ell(\Theta)=\int \sum_{y=0}^N  \ln P(y|{\bf x}, \Theta)
    P(y|{\bf x}, \Theta_0) dg({\bf x}),\]
    and by Theorem 2.5 in \cite{Whi}  \[\hat \Theta \stackrel{p}{\rightarrow} \Theta .\]
\end{proof}
\begin{corollary}
    The ML estimator of the fractional binomial distribution is  consistent if
    the parameters lie in a known compact set, $ (p,H,c^{\circ})\in \mathcal{D}$ where $\mathcal{D}$ is a compact set in  $( 0,1)^3.$
\end{corollary}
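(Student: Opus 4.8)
The plan is to treat this statement as the covariate-free special case of Theorem 3.3 and to re-run the White-type consistency argument, which is in fact simpler here because there is no covariate distribution to integrate against and the support is the finite set $\{0,1,\dots,N\}$. Now the parameter is $\Theta=(p,H,c')$ itself, ranging over the compact set $\mathcal{D}\subset(0,1)^3$, and we observe $y_1,\dots,y_n$ i.i.d.\ from $B_N(p_0,H_0,c_0')$. The role played by Assumption 3.2 {\it ii)} in Theorem 3.3 (recovering the regression coefficients from the fitted parameters) is no longer required; it is replaced by the direct identifiability of the map $(p,H,c')\mapsto B_N(p,H,c')$.

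First I would record identifiability, reusing the computation from the proof of Theorem 3.3: from $E(B_N(p,H,c))=Np$ one recovers $p$; from $P(B_N(p,H,c)=N)=p(p+c)^{N-1}$ one recovers $c$; and from $\mathrm{var}(B_N(p,H,c))=Np(1-p)+\sum_{i\neq j}pc|i-j|^{2H-2}$ one recovers $H$. Since the reparameterization $c=c'\cdot\min\{1-p,\tfrac12(-2p+2^{2H-2}+\sqrt{4p-p2^{2H}+2^{4H-4}})\}$ is a homeomorphism onto the admissible range of $c$, the map $(p,H,c')\mapsto B_N(p,H,c')$ is injective, so distinct $\Theta\in\mathcal{D}$ yield distinct distributions.

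Next I would verify the hypotheses of the uniform law of large numbers (Lemma 2.4 in \cite{Whi}). By Proposition 2.2 in \cite{Lee}, $P(B_N(p,H,c')=\ell)>0$ for every $\ell\in\{0,\dots,N\}$ and every $\Theta\in\mathcal{D}$, and by (3.1--3.3) this pmf is continuous in $\Theta$; hence $\ln P(\ell\mid\Theta)$ is continuous. Being a continuous function on the compact set $\mathcal{D}$ with $\ell$ ranging over a finite set, it is uniformly bounded, $|\ln P(\ell\mid\Theta)|\le D<\infty$, so the dominating function is simply a constant and the integrability requirement is immediate. Lemma 2.4 then gives $\sup_{\Theta\in\mathcal{D}}|\hat\ell(\Theta)-\ell(\Theta)|\stackrel{p}{\rightarrow}0$, where $\hat\ell(\Theta)=\frac1n\sum_{i=1}^n\ln P(y_i\mid\Theta)$ and $\ell(\Theta)=\sum_{y=0}^N P(y\mid\Theta_0)\ln P(y\mid\Theta)$.

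Finally, the identifiable-uniqueness condition for Theorem 2.5 in \cite{Whi} holds because $\ell(\Theta_0)-\ell(\Theta)$ equals the Kullback--Leibler divergence between the pmfs at $\Theta_0$ and $\Theta$, which is nonnegative and vanishes only when the two pmfs coincide, i.e.\ only at $\Theta=\Theta_0$ by the injectivity of the second step; hence $\Theta_0$ is the unique maximizer of $\ell$. Applying Theorem 2.5 then yields $\hat\Theta\stackrel{p}{\rightarrow}\Theta_0$. I do not expect a genuine obstacle here: every ingredient except identifiability is inherited (and simplified) from Theorem 3.3, and identifiability in the covariate-free setting reduces to the injectivity of $(p,H,c')\mapsto B_N(p,H,c')$ already established above.
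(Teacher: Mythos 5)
Your proposal is correct and follows essentially the same route as the paper: the corollary is presented there as an immediate specialization of the proof of Theorem~3.3 (identifiability via the mean, $P(B_N=N)$, and the variance; positivity and continuity of the pmf; then Lemma~2.4 and Theorem~2.5 of \cite{Whi}), which is exactly the argument you reconstruct, simplified by the absence of covariates. Your explicit remarks on the order of recovery of $(p,c,H)$, the injectivity of the reparameterization $c'\mapsto c$, and the Kullback--Leibler uniqueness step merely make precise what the paper leaves implicit.
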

The consistency of the MLE is held under the set of assumptions, in particular, it is assumed that the parameters lie in a known compact set. 
In the next section, we check the performance and consistency of ML estimators  through simulations under Assumption 3.2.
However, in practice, the assumptions often fail as one may not know the possible range of parameters.  
Yet,  the consistency of estimators is not of concern in data analysis, as there is no guarantee that the data is generated from the assumed model. Rather, the goal is to find a model that is most suitable for data. Therefore, when we fit the FB regression model to data in Section 5, we do not limit the parameter space to a compact set to find MLEs.
\section{Simulation}
\subsection{Large sample properties of the MLE}
We simulated  fractional binomial regression models obtaining $y_i, i=1,2,\cdots, n=100,$ with a single covariate that follows uniform distribution, $x_{i}  \sim U(-2,2)$,  parameters $\Theta=(\psi_1 ,\psi_2, \eta_1, \eta_2, \nu_1, \nu_2  )=(-1, 1,2,1,0,-1 ), $ and $N=10.$ With the simulated data $\{ (y_i, x_i), i=1,2,\cdots,n\},$ we obtained the MLEs of the parameters with the constraint $\hat \Theta\in [-5,5]^6$ through numerical optimization. More specifically, we used the optimization method ``L-BFGS-B" in R, which is a variant of quasi-Newton method that allows constraints on parameters.  We repeated this procedure 20 times and computed the bias and standard error of the estimators. 
In the same way, the bias and standard error of the estimators were obtained with different values of $n,$ $\Theta$, and $N,$ and the results are shown in Table 1. 

Overall, the estimates are fairly close to the true parameters as the bias and standard error are less than .5 in most cases. 
It is observed that, for a given set of parameters $\Theta$ and $N,$ if we increase the sample size $n,$ the standard error of each estimator decreases, and the bias of each estimator also tends to decrease with few exceptions.  This shows that the larger the sample size, the more precise the estimator,  which is expected from the consistency of MLEs in Theorem 3.3.

We also simulated the FB model with  a vector covariate $ {\bf x}_i =(x_{1i},x_{2i})$ where $x_{ji}$ follows uniform distribution, $x_{ji}\sim U(-1,1),$ $ j=1,2,$ with the following two cases: (a) $x_{1i}$ and $x_{2i}$ are independent of each other ($\rho=0$) and (b) $x_{1i}$ and $x_{2i}$ are correlated ($\rho\approx 0.7$).
The parameter values were $\Theta=(\psi_1 ,\psi_2, \psi_3, \eta_1, \eta_2,\eta_3,  \nu_1, \nu_2 , \nu_3 )=(1, -2, 0.5,-0.5,2,1,2,1,3 ),$ and $  N=10.$ The bias and standard error of the MLEs are shown in Table 2.  In both cases of independent and correlated covariates, estimates become closer to the true parameters as the sample size $n$ increases from 100 to 400, since the bias and standard error of the estimators tend to decrease. 

We note the computational challenges posed by the FB model. Since the pmf of the fractional binomial distribution does not have a closed-form expression and should be computed iteratively for each set of parameters, simulating the FB model and obtaining the MLE through numerical optimization become  computationally expensive and take quite a long time if we increase the number of covariates and the bound on the maximum response. Through simulations and more experiments with data analysis in the next section, it is found that the practicality of the FB model is limited to small datasets with only a few covariates (approximately fewer than 5) and a small or moderate bound on the maximum response (approximately less than 20). The sample size seems to have a lesser effect on the computation time, and its effect is confounded with other factors, the number of covariates and the bound on the maximum response. For example, repeated observations increase the sample size but may not increase the computational burden as much.

\begin{table}[h]
    \centering
    \begin{tabular}{|c|c||c|c||c|c|}
    \hline
 \multicolumn{2}{|c||}{ }&   \multicolumn{2}{c||}{ N=10}&  \multicolumn{2}{c|}{ N=20} \\ 
 \multicolumn{2}{|c||}{ }     & n=100 & n=400& n=100 & n=400 \\ \cline{3-6} 
      \multicolumn{2}{|c||}{ }    &Bias (SE)&  Bias (SE) & Bias (SE)& Bias (SE)  \\ \hline
     $ \psi_1 =-1$ & $ \hat\psi_1 $ &   0.01 (0.23)& 0.00 (0.07) &  0.06 (0.13)&   0.00 (0.06)  \\
    $\psi_2 =1$   & $ \hat \psi_2 $ & 0.03 (0.16)& -0.00 (0.06) &    0.00 (0.10)&  0.02 (0.06)  \\ \cline{2-6}
     $\eta_1 =2 $ & $\hat \eta_1$ & 0.26 (1.06)& -0.03 (0.33)&   -0.04 (0.42)&  0.01 (0.24)   \\
      $\eta_2 =1$   &$\hat \eta_2$ &0.21(0.70) &  -0.03 (0.21) &  -0.07 (0.28)& 0.06 (0.13)  \\ \cline{2-6}
    $\nu_1 =0$    &$\hat \nu_1$& 0.23 (1.07)&  0.15 (0.26)&   0.12 (0.57)& -0.03 (0.30) \\
     $\nu_2 =-1$    &$\hat \nu_2$&  -0.31(0.99)&  -0.02 (0.27)&   -0.36 (0.65)& -0.06 (0.31) \\ \hline \hline
        $\psi_1 =-0.5$& $\hat\psi_1$ & -0.03 (0.24)  & -0.03 (0.09) &-0.08 (0.22) &  -0.01 (0.09)   \\
        $\psi_2 =2$ &  $\hat \psi_2$ &0.06 (0.27) & 0.02 (0.07) & 0.14 (0.26)   &  0.05 (0.11) \\ \cline{2-6}
        $\eta_1 =1$ &  $\hat \eta_1$ &-0.05 (0.63) & 0.02 (0.24) & 0.03 (0.50) &    -0.07 (0.18) \\
        $\eta_2 =-2$ & $\hat \eta_2$  &-0.72 (1.35)& -0.13 (0.37)   &  -0.44 (0.96)&  0.01 (0.35) \\ \cline{2-6}
        $\nu_1 =2$ &  $\hat \nu_1$  &0.66 (1.47) & 0.31 (0.75) & 0.55 (1.56)  & 0.29 (0.68) \\
        $\nu_2 =1$ & $\hat \nu_2$ &-0.28(1.35) & 0.30 (0.64) & -0.14 (1.32)  &   0.26 (0.51)  \\ \hline
    \end{tabular}
    \caption{Bias and standard error of MLEs with a single covariate}
    \label{tab:my_label}
\end{table}

\begin{table}[h]
    \centering
    \begin{tabular}{|c|c||c|c||c|c|}
    \hline 
    \multicolumn{2}{|c||}{ }& \multicolumn{4}{c|}{ N=10} 
    \\ \cline{3-6} 
 \multicolumn{2}{|c||}{ }&   \multicolumn{2}{c||}{ $\rho=0$ }&  \multicolumn{2}{c|}{ $\rho\approx .7$} \\ 
 \multicolumn{2}{|c||}{ }     & n=100 & n=400& n=100 & n=400 \\ \cline{3-6} 
\multicolumn{2}{|c||}{ }    &Bias (SE)&  Bias (SE) & Bias (SE)& Bias (SE)  \\ \hline
     $ \psi_1 =1$ & $ \hat\psi_1 $ &   0.05 (0.18)& 0.00 (0.05) &  0.06 (0.14)&   0.01 (0.05)  \\
    $\psi_2 =-2$   & $ \hat \psi_2 $ & -0.02 (0.22)& 0.02 (0.16) &    0.12 (0.43)&  0.02 (0.16)  \\ 
    $\psi_3 =0.5$   & $ \hat \psi_3 $ & 0.03 (0.18)& -0.01 (0.10) &    -0.18 (0.61)&  -0.06 (0.20)  \\ \cline{2-6}
     $\eta_1 =-0.5 $ & $\hat \eta_1$ & -0.27 (1.47)& 0.17 (0.44)&   -0.27 (0.90)&  -0.11 (0.45)   \\
      $\eta_2 =2$   &$\hat \eta_2$ &0.52 (1.76) &  -0.29 (0.45) &  0.57 (1.57)& 0.33 (0.67)  \\ 
      $\eta_3 =1$   &$\hat \eta_3$ &0.03 (1.23) &  0.00 (0.54) &  0.63 (1.92)& -0.13 (0.88)  \\ \cline{2-6}
    $\nu_1 =2$    &$\hat \nu_1$& 0.17 (1.19)&  0.24 (0.53)&   -0.34 (0.88)& -0.10 (0.48) \\
     $\nu_2 =1$    &$\hat \nu_2$&  0.38 (2.00)&  0.04 (1.26)&   -0.48 (2.03)& -0.08 (1.18) \\ 
     $\nu_3 =3$    &$\hat \nu_3$&  0.89 (1.45)&  0.55 (1.00)&   -0.10 (2.44)& -0.12 (1.23) \\ \hline 
    \end{tabular}
    \caption{Bias and standard error of MLEs with a vector covariate}
    \label{tab:my_label}
\end{table}

\subsection{Comparing model performance}

To compare the performance of ZI models and the FB model, data from each of the models were simulated. 
The FB regression model was simulated, $\{(y_i, x_i), i=1,2,\cdots,n\}$,  with sample size $n=200$, $N=10$, and a single covariate from the uniform distribution (0,1). The simulation was repeated 20 times with  parameters randomly generated from a uniform distribution such that  $p_i\in(0.12, 0.95), $ and 
$ h_i, c_i^{\circ}\in (0.50, 0.95),$ for all $i=1,\cdots, n.$ The range of the parameters was set to ensure the presence of zero-inflation in the simulated data with various centers and distribution dispersions. In the same way,  $\{(y_i, x_i), i=1,2,\cdots,200\}$ were generated from ZIB such that $p_i\in (0.12,0.95), \pi_i\in (0.03,0.62),  $ and from ZIP, ZINB, ZINB-2 with parameters in the range of   $\lambda_i\in(0.02, 4.48),$ $ 
 \theta_i\in(0.08, 12.18 ), $ $ \mu_i\in (0.82, 3.00),$ and $\pi_i\in(0.03, 0.62) .$ The parameter range was chosen so that the response variable is not too large, approximately bounded by 15. 
 \begin{table}[htp]
    \centering
    \begin{tabular}{|c|c|c|c|}
   \hline \\[-1em] True model: FB   &      $\overline{\Delta}$  AIC&  \% $\Delta$  AIC $>4$  & \% $\Delta$  AIC $< -4$       \\ \hline
AIC(ZIB)-AIC(FB)& 86.43& 100\% & 0\%\\
 AIC(ZIP)-AIC(FB) &  142.02&  95\% & 0\%  \\
AIC(ZINB)-AIC(FB)  & 143.04 &    95 \% & 0\% \\
AIC(ZINB2)-AIC(FB)&   152.42&   100 \% & 0\% \\ 
   \hline \\[-1em]
    True model: ZIB & $\overline{\Delta}$  AIC&  \% $\Delta$  AIC $>4$  & \% $\Delta$  AIC $< -4$      \\  \hline
    AIC(ZIB)-AIC(FB) &  -1.33&  5\% & 15 \%   
    \\ 
    AIC(ZIP)-AIC(FB) &  46.11&  75\% & 0 \%   
    \\ 
AIC(ZINB)-AIC(FB) &  48.11&  85\% & 0 \%   
    \\ 
    AIC(ZINB2)-AIC(FB) &  54.64&  90\% & 0 \%   
    \\ \hline \\[-1em]
    True model: ZIP  &      $\overline{\Delta}$  AIC&  \% $\Delta$  AIC $>4$  & \% $\Delta$  AIC $< -4$      \\ \hline
     AIC(ZIB)-AIC(FB) & 10.71 & 75 \%  & 0\%  \\
        AIC(ZIP)-AIC(FB) &  -1.27&  15\% & 25 \%   \\
AIC(ZINB)-AIC(FB)  & -0.19 &    10 \% & 10 \%\\
AIC(ZINB2)-AIC(FB)&   2.24&   20 \% & 0\%\\ \hline \\[-1em]
 True model: ZINB  &      $\overline{\Delta}$  AIC&  \% $\Delta$  AIC $>4$  & \% $\Delta$  AIC $< -4$      \\ \hline
        AIC(ZIB)-AIC(FB) & 93.72 &  100\%  & 0\%  \\
        AIC(ZIP)-AIC(FB) &  37.08&  80\% & 5\%   \\
AIC(ZINB)-AIC(FB)  & -9.68 &    0 \% & 45 \%\\
AIC(ZINB2)-AIC(FB)&   -8.66&   0 \% & 45\%\\ \hline \\[-1em]
         True model: ZINB-2  &      $\overline{\Delta}$  AIC&  \% $\Delta$  AIC $>4$  & \% $\Delta$  AIC $< -4$      \\ \hline
 AIC(ZIB)-AIC(FB) & 89.18 &  95\%  & 0\%  \\
        AIC(ZIP)-AIC(FB) &  25.18&  80\% & 5\%   \\
AIC(ZINB)-AIC(FB)  & -7.13 &    0 \% & 60 \%\\
AIC(ZINB2)-AIC(FB)&   -5.58&   0 \% & 45\%\\ \hline  
    \end{tabular}
    \caption{Comparing model performance using AIC (${\Delta}$ AIC denotes the AIC difference between two models, and    $\overline{\Delta}$ AIC denotes the mean AIC difference between two models.) }
    \label{tab:my_label}
\end{table}

The ZI regression models and the FB model were fitted to the simulated data of each model, and the model fit was compared using AIC in Table 3. 
The results show that when the true generating process is the FB model, FB fits data better than other models, as the AIC of the FB model is lower on average by at least 86.43 points than the AICs of other ZI models, and 95\% of the time or more the AIC of the FB is lower by more than 4 points than the AICs of the other models.  
When a ZI model is the true generating process, FB does not perform as well as the true model, but still performs better than some of the other ZI models.  
For example, when ZINB was the actual generating process,  ZINB and ZINB-2 better fit the data, as FB has a higher mean AIC than ZINB and ZINB-2, with 45\% of the time having the AIC greater by more than 4 points and no case of the AIC lower by more than 4 points than ZINB and ZINB-2. However, the FB model still performs better than ZIB and ZIP as the AIC of the FB is lower by more than 4 points 80\% of the time or more than those of ZIB and ZIP. 
\par
To understand when FB performs better than other models, a graphical examination is provided in Figure 2, in which the data distribution and the fitted probability distributions are overlaid for each simulated count dataset of the fractional binomial distribution. Note that it is when the distribution is left-skewed that the FB distribution fits the data much better than other ZI distributions. 
\begin{figure}[htp]
    \centering
\includegraphics[width=0.45\linewidth]{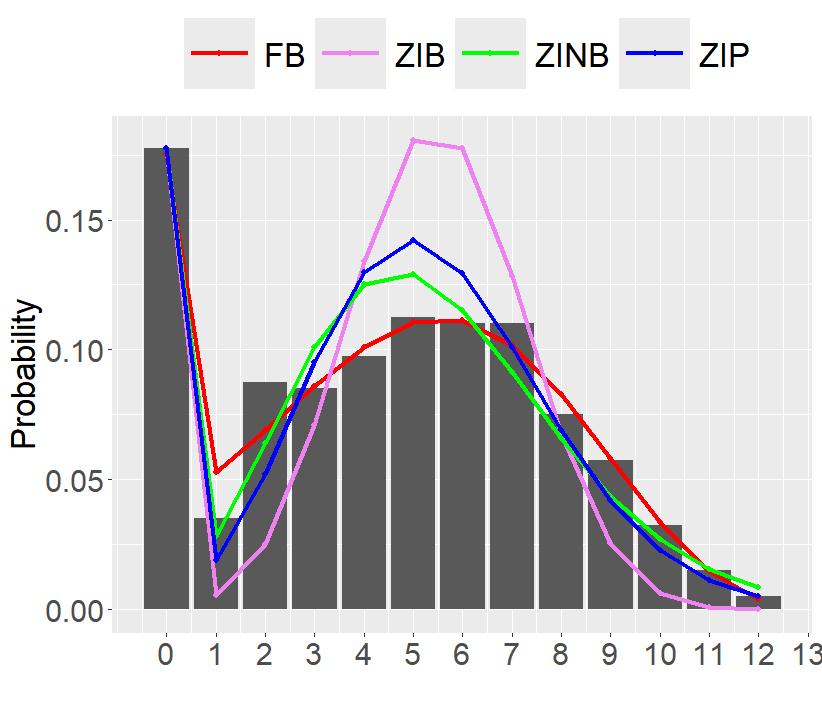}
\hspace{7pt}
\includegraphics[width=0.45\linewidth] 
{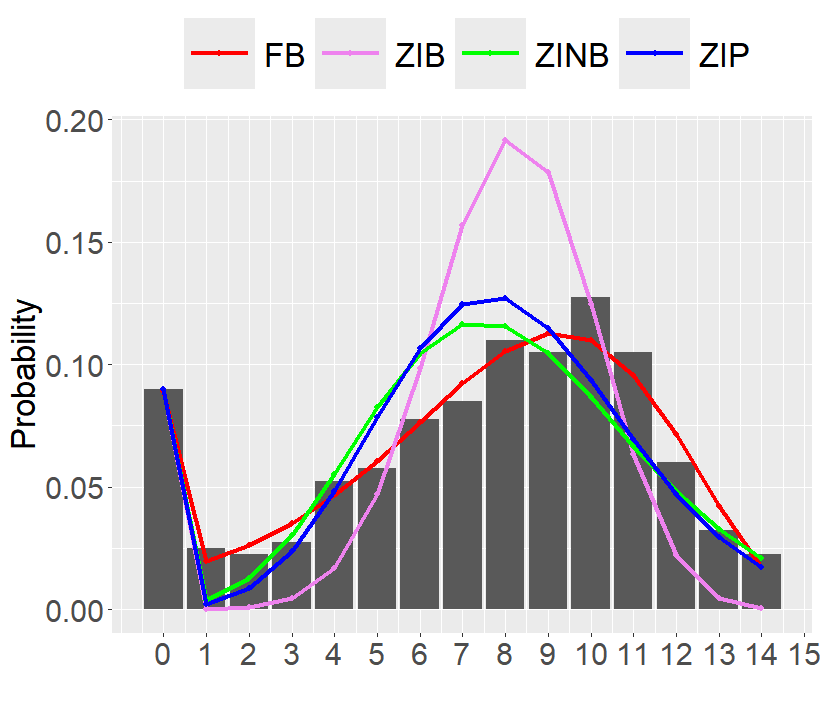}
\\
\includegraphics[width=0.45\linewidth]{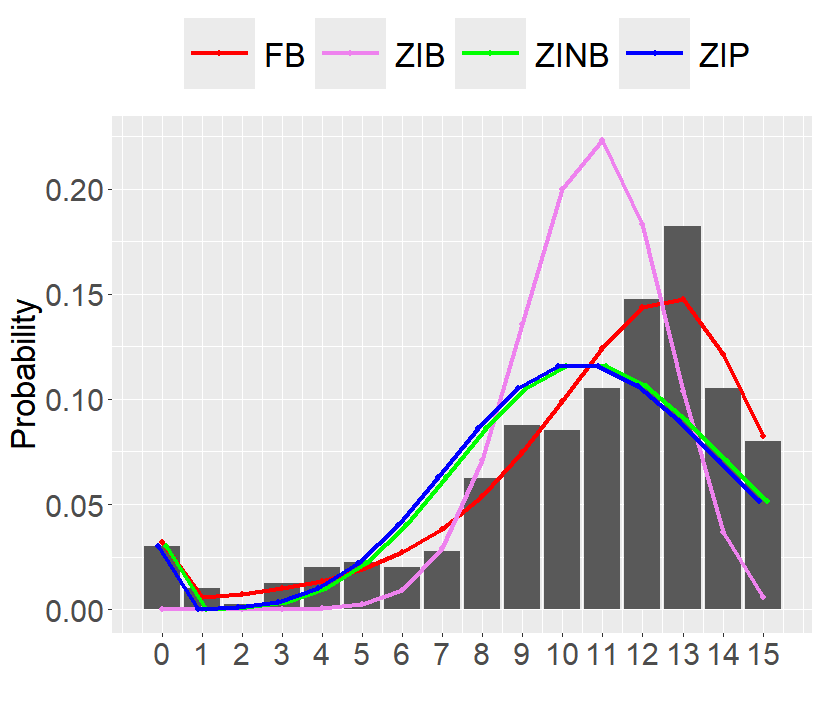}
\hspace{7pt}
\includegraphics[width=0.45\linewidth]{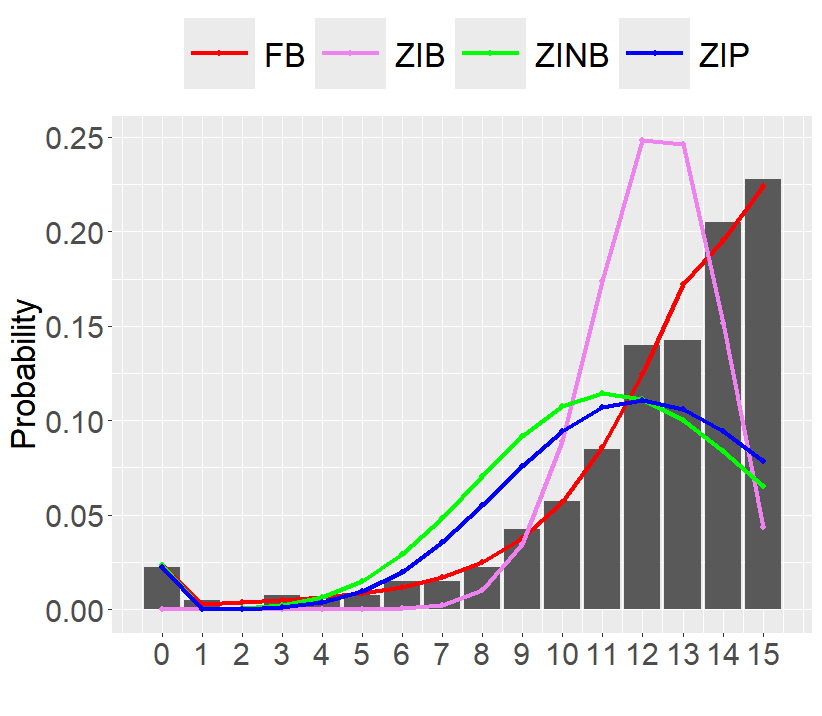}
    \caption{Fitted probability distributions for simulated count data generated by fractional binomial distribution with parameters $(p,H,c)\in (0.3, 0.7, 0.2 )$  for top left, $(p,H,c)\in (0.5, 0.7, 0.2 )$ for top right, 
    $(p,H,c)\in (0.7, 0.7, 0.15 )$ for bottom left, and $(p,H,c)\in (0.8, 0.6, 0.1)$ for bottom right.}
    \label{fig:enter-label}
\end{figure}


 \section{Data analysis}
 \subsection{Data from horticulture}
 In \cite{Rid}, zero-inflated models were fitted to a dataset that is from an experiment published in 1993 titled ``Micropropagation of columnar apple trees." In their research, they micropropagated 270 shoots of the columnar apple cultivar Trajan. These shoots were treated with various experimental conditions, and the number of roots for each apple shoot was recorded. More specifically, the apple shoots were treated under differing concentrations of cytokinin 6-benzylaminopurine (BAP) and different photo periods; some were treated with an 8-hour photo period, whilst the rest were treated with a 16-hour photo period.  We fitted the FB regression model and other zero-inflated models (ZIB, ZIP, ZINB, ZINB-2) to the dataset with the response variable, the number of roots for each apple shoot, and a numerical covariate, BAP, and a categorical covariate, photoperiod (Pho).


\begin{table}[htp]\begin{tabular}{|c|c|ccc|}\hline   \\[-1em]   log/logit&Pho(categorical)  &ZIP  & ZINB   &ZINB-2   \\ link& BAP(numerical) & Coef (p-value)& Coef (p-value)& Coef (p-value) \\
\hline &  (Intercept)& 1.97 (0.00)& 1.98 (0.00) & 1.96 (0.00)  \\ $\mu$      &Pho(16) & -0.28 (0.00)& -0.28 (0.00) &-0.39 (0.00)\\(log)     & BAP  & 0.00 (0.92) & 0.00 (0.92) & 0.00 (0.66) \\\hline&  (Intercept) & -4.31 (0.00) & -4.52 (0.00) & -4.43 (0.00)\\$\pi$       &Pho(16) & 4.18 (0.00)  & 4.40 (0.00) & 4.08 (0.00) \\ (logit)     & BAP & 0.00 (0.92) & 0.00 (0.99) & 0.02 (0.60) \\\hline  &(Intercept) & &2.51 (0.00) & 1.56 (0.02)\\$\theta$       &Pho(16) & &  & -1.52 (0.03)\\ (log)  &  BAP  & & &0.20 (0.06)  \\\hline\end{tabular}
\caption{Zero-inflated models with apple data}\end{table}

{\begin{table}[htp]\begin{tabular}{|c|c|c|c|}\hline \\[-1em]   & Pho(categorical) & ZIB &FB  
\\ logit link& BAP(numerical)& Coef (p-value)& Coef (p-value)  \\
\hline  &  (Intercept)& 
  -0.32 (0.00)& -0.37 (0.00) \\ $p$     &Pho(16) & -0.44 (0.00) & -1.29 (0.00)  \\ & BAP & 0.00 (0.88) & 0.00 (0.64)
  \\ \hline 
  &(Intercept)& 
  -4.27 (0.00)&  \\ $\pi$     &Pho(16) & 4.15 (0.00) &  \\ & BAP & 0.00 (0.90) & 
  
  \\ \hline&  (Intercept)&& -1.47 (0.10)\\ $H$       &Pho(16)& & 2.53 (0.01) \\& BAP& &0.04 (0.42)   \\ \hline  &(Intercept)& & 2.46 (0.05) \\ $c^{\circ}$       &Pho(16)&& -0.67 (0.56) \\ &  BAP& &-0.14 (0.04) \\ \hline\end{tabular} \caption{ZIB and Fractional binomial regression models with apple data}\label{tab:my_label}\end{table}

Tables 4 and 5 show the estimates of the coefficients and their p-values for each model.  When comparing the results of the three models, ZIP, ZINB, and ZINB-2, most of the additional coefficients in the dispersion parameter $\theta$ in ZINB and ZINB-2 are significant at the 5\% significance level.  It is also observed that the covariate BAP is not statistically significant  in all zero-inflated regression models (ZIB, ZIP, ZINB, ZINB-2) except in the FB model, where BAP has a p-value of 0.04 
in $c^{\circ}$, which leads us to fit all models again without BAP. 

\begin{table}[htp]
     \centering
     \begin{tabular}{|c|c|c|c|c|c|} \hline
     covariates: Pho, BAP &ZIB & ZIP& ZINB& ZINB-2&FB \\ \hline 
   Log-likelihood  & -677.18 & -630.64& -621.95& -615.00&-611.48  \\  \hline 
    AIC& 1366.36&  1273.28 & 1257.90& 1248.00 & 1240.96 \\  
\hline      \multicolumn{5}{c}{ } \\   \hline
     covariate: Pho & ZIB&ZIP& ZINB& ZINB-2&FB \\ \hline 
     Log-likelihood & -677.20 & -630.60& -622.00 & -618.81&  -614.53 \\  \hline 
     AIC& 1362.40 &1269.30 & 1253.92& 1249.63 &  1241.06 \\ 
\hline
     
     \end{tabular}
     \caption{Model comparison}
     \label{tab:my_label}
 \end{table}

 Table 6 shows the AICs of the fitted models with and without BAP.  For each set of covariates, the FB model has the lowest AIC among all the models, indicating that the FB model fits the data best. For each model, there is not much difference in AIC after the covariate BAP is removed. For FB and ZINB-2, the difference in AIC is less than 2, and for all other models it is less than 4.  Therefore, we include Pho only for the covariate and proceed to model checking.

 \begin{figure}[htp]
     \centering
\includegraphics[width=0.32\linewidth]{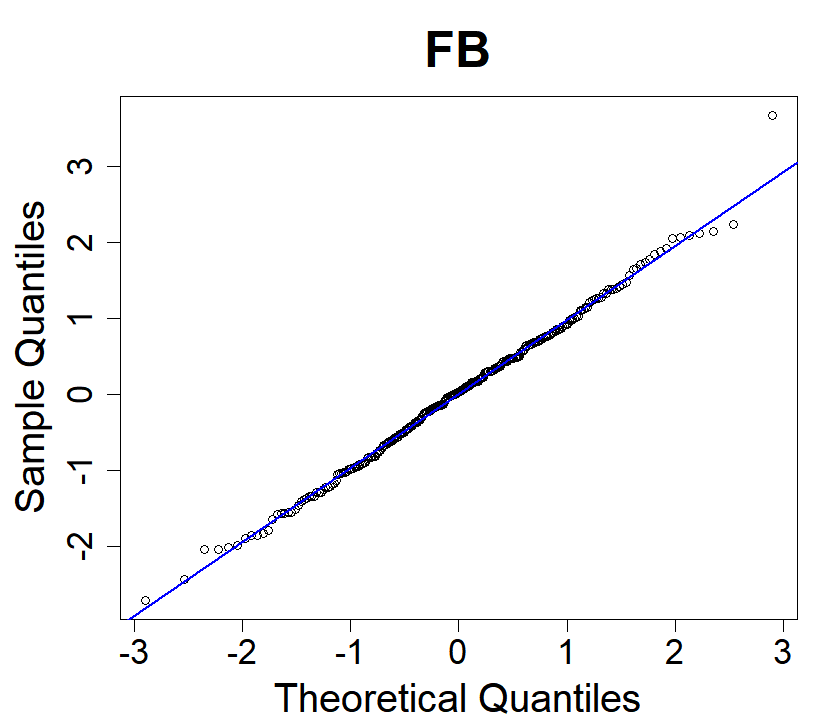}\includegraphics[width=0.32\linewidth]{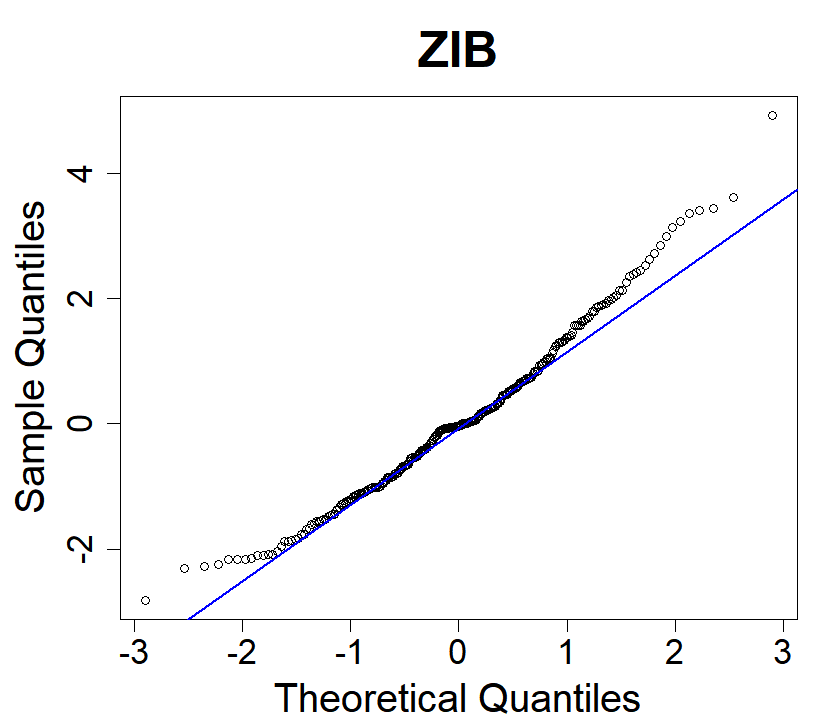}\includegraphics[width=0.32\linewidth]{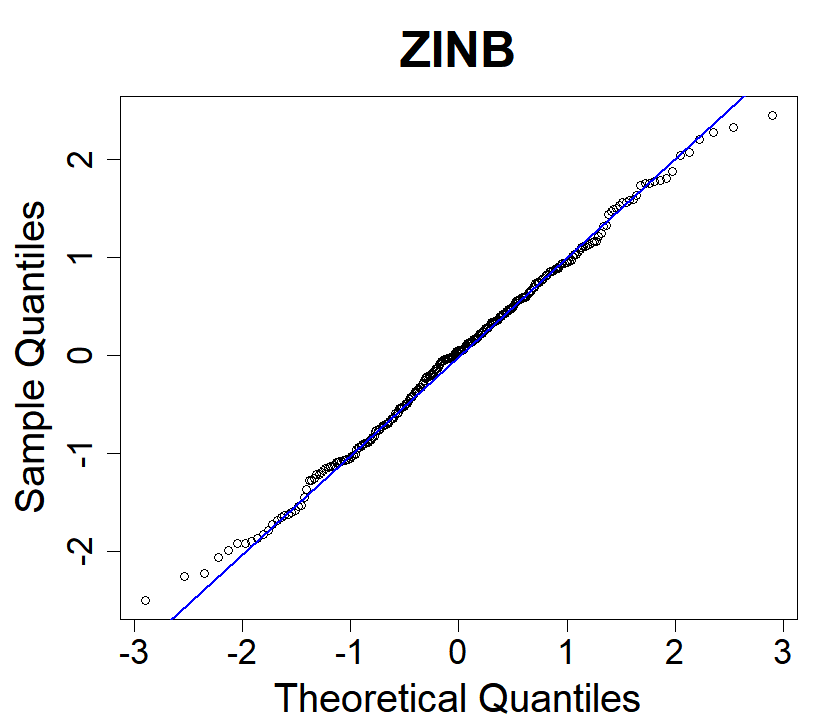}

\includegraphics[width=0.32\linewidth]{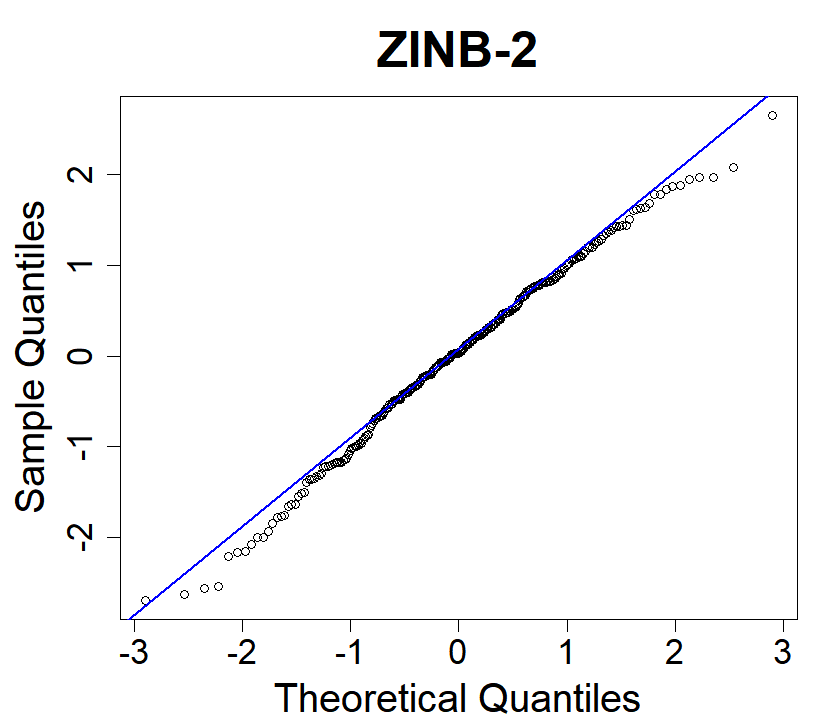}\includegraphics[width=0.32\linewidth]{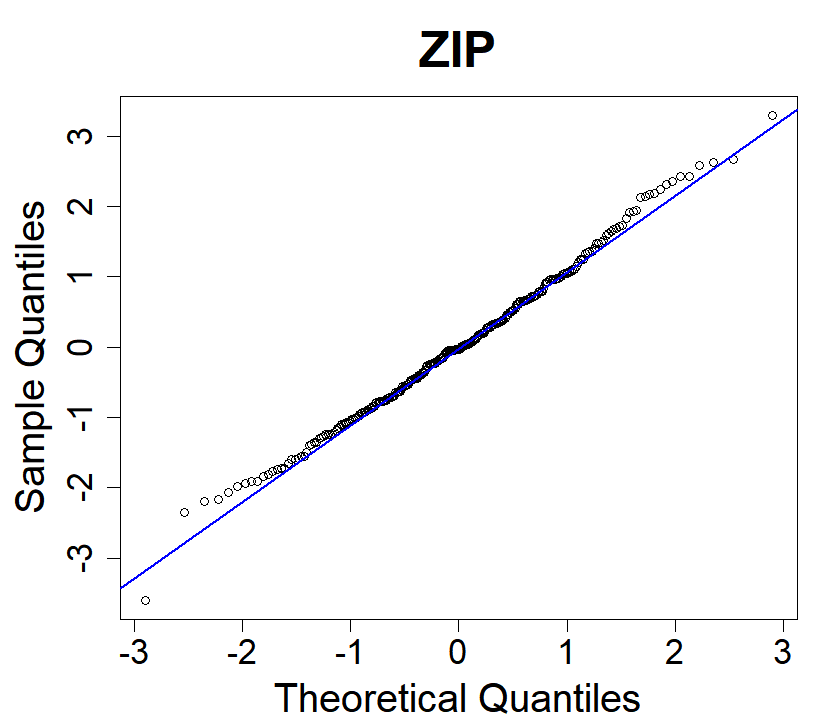}
     \caption{Q-Q plot for Randomized Quantile Residuals}
     \label{fig:enter-label}
 \end{figure}

\begin{table}[htp]
     \centering
    \begin{tabular}{|c|c|c|c|c|c|}
       \hline  Model &  ZIB& ZIP &  ZINB& ZINB-2 &FB \\ 
         \hline 
        p-value &  0.00 & 0.52 & 0.52 &  0.40& 0.75\\
        \hline 
    \end{tabular}
    \caption{ P values for the SW normality test of RQRs }
    \label{tab:my_label}
\end{table}
 
We used randomized quantile residual (RQR)  for model diagnostics, which was first proposed in \cite{Dun} for discrete data, and was used for zero-inflated count models in \cite{bai, feng, Sell1}. If a model is correctly specified, RQR should be approximately normally distributed, and model misspecification can be detected by departure from the normality of RQR. We used the Q-Q plot (Figure 3) and the Shapiro-Wilk (SW) normality test (Table 7) to examine the normality of RQR. For each model, the average p-value of the SW test was obtained from 100 replicated RQRs, since RQR involves randomness \cite{bai}.
All models except ZIB pass the normality test at the 5\%  significance level, with the FB model having the largest p-value.  The fitted probability distribution and data distribution for each value of the covariate are shown in Figure 4. It is observed that the FB model fits better, while other models underestimate the probability, at response value 1, in both cases.


\begin{figure}[htp]
    \centering
\includegraphics[width=0.45\linewidth]{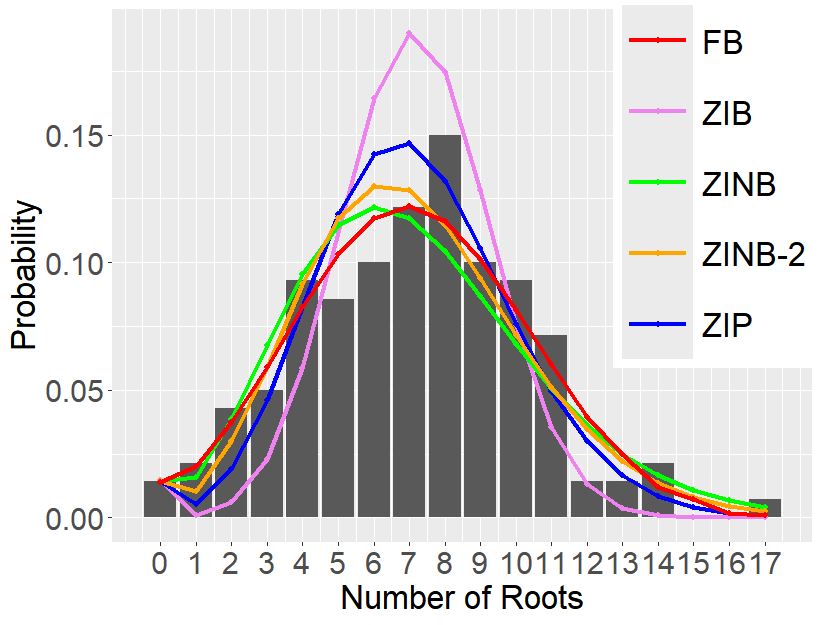}
\hspace{8pt}
\includegraphics[width=0.45\linewidth]{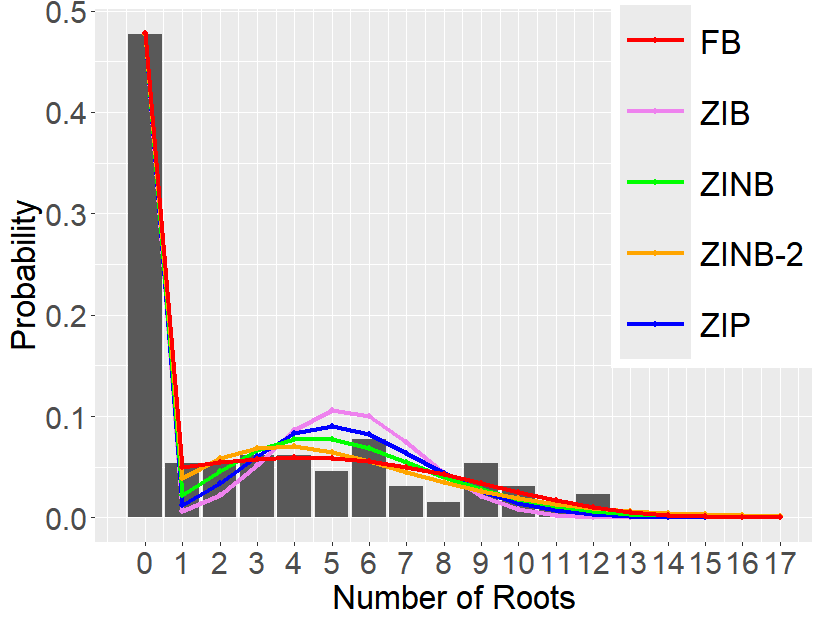}
    \caption{Data distribution and fitted distribution for photoperiod=0 (left), and photoperiod=1 (right).}
    \label{fig:enter-label}
\end{figure}

 \subsection{Data from public health}
Mortality dataset from the Human Mortality Database contains death count during 1933-2023 in the U.S. (\url{ https://www.mortality.org/Country/Country?cntr=USA}), which was originally from the U.S. Census Bureau and the National Center for Health Statistics.
 In particular, we used a dataset that recorded estimates of the number of deaths per 10-year time interval and 5-year age group.

\begin{figure}[h]
    \centering
\includegraphics[width=.9\linewidth]{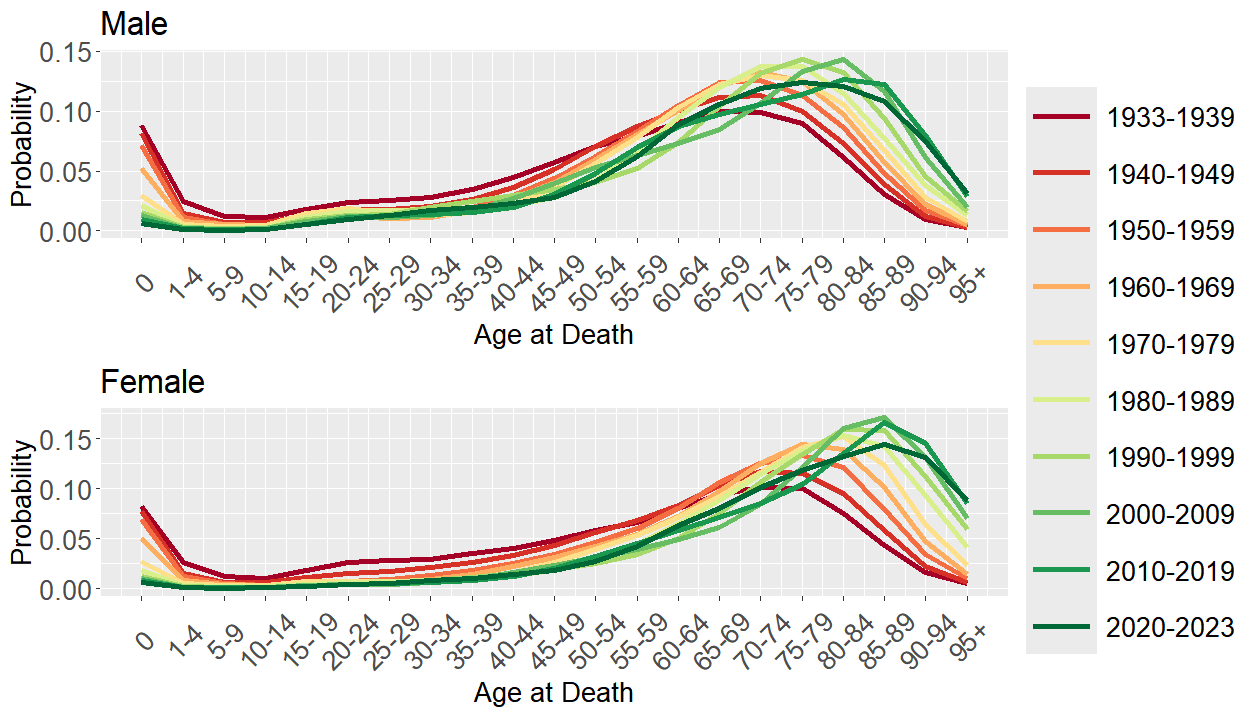}
    \caption{Data distribution of age at death}
    \label{fig:enter-label}
\end{figure}

The dataset is visualized in Figure 5. It shows that for both men and women, the distributions of age at death are zero-inflated, especially in the years 1933-1979, and left-skewed. The ZI models and the FB model were fitted to the dataset where the response variable is age at death with 5-year age group, i.e., $y=0$ for age at death 0,  $y=1$ for age at death between $1-4$, $y=2$ for  $5-10$, $\cdots,$ $y=19$ for $90-94$, $y=20$ for 95 or above. Sex and year of death (Year) were used as covariates with Year=1 for year of death between $1933-1939$, Year=2 for $1940-1949$, $\cdots$, Year=9 for $2010-2019,$ and Year=10 for $2020-2023.$ Since the size of the data is large as it is from censuses over many decades, we used sample data obtained by sampling one in every 10,000 observations in each of 10-year time interval, 5-year age group, and gender, e.g., during the years $1933-1939,$  the estimate of the death counts for females at age 0 was 362,056.18, in which  36 death counts was sampled.

\begin{table}[htp]\begin{tabular}{|c|c|ccc|}\hline \\[-1em]  log/logit&    &ZIP & ZINB &ZINB-2 \\ link& & Coef (p-value)& Coef (p-value)& Coef (p-value)    \\\hline &  (Intercept)& 2.50 (0.00)& 2.49 (0.00) & 2.51 (0.00)  \\ $\mu$      &Female & 0.08 (0.00)& 0.06 (0.00) &0.08 (0.00)\\(log)     & Year  & 0.02 (0.00) & 0.02 (0.00) & 0.02 (0.00) \\\hline&  (Intercept) & -1.77 (0.00) & -2.22 (0.00) & -1.77 (0.00)\\$\pi$       &Female & -0.09 (0.27)  & -0.24 (0.01) & -0.09 (0.27) \\ (logit)     & Year & -0.33 (0.00) & -0.24 (0.00) & -0.33 (0.00) \\\hline  &(Intercept) & &1.78 (0.00) & 0.99 (0.00)\\$\theta$       &Female & &  & -0.32 (0.11)\\ (log)  &  Year  & & &1.69 (0.00)  \\\hline\end{tabular}\caption{ZI models for mortality data}\end{table} }
{\begin{table}[htp]\begin{tabular}{|c|c|c|c|}\hline \\[-1em]    &  & ZIB &FB
\\ logit link& & Coef (p-value)& Coef (p-value)
\\\hline  &  (Intercept)& 
  0.35 (0.00)& 0.13 (0.00) \\ $p$     &Female & 0.30 (0.00) & 0.30 (0.00)  \\ & Year & 0.08 (0.00) & 0.10 (0.00)
  \\ \hline 
  &(Intercept)& 
  -1.77 (0.00)&  \\ $\pi$     &Female & -0.09 (0.27) &  \\ & Year & -0.33 (0.00) & 
  
  \\ \hline&  (Intercept)&& 1.65 (0.00)\\ $H$       &Female& & -0.00 (0.95) \\& Year& &-0.15 (0.00)   \\ \hline  &(Intercept)& & -0.15 (0.05) \\ $c^{\circ}$       &Female&& 0.10 (0.13) \\ &  Year& &0.09 (0.00) \\ \hline\end{tabular} \caption{ZIB and the FB models for mortality data}\label{tab:my_label}\end{table}

\begin{table}[htp]
     \centering
     \begin{adjustbox}{width=.8\columnwidth,center}
     \begin{tabular}{|c|c|c|c|c|c|} \hline
      &ZIB & ZIP& ZINB& ZINB-2&FB \\ \hline 
        Log-likelihood & -57957.04& -51707.28& -56652.07& -51607.46&-47943.51  \\ \hline
          AIC& 115926.1&103426.6 &113318.1 &103232.9 & 95905.02 \\ \hline 
     \end{tabular}
     \end{adjustbox}
     \caption{Model comparison}
     \label{tab:my_label}
 \end{table}


 The estimated coefficients and their p-values in each model are shown in Tables 8-9, and the AIC in Table 10.  FB  has the lowest AIC, followed by ZINB-2, with a difference in AIC between the two approximately as large as 7300. The fitted probability distributions overlaid with the data distribution for the selected years are shown in Figure 6. The results show that the FB model fits the data better compared to other zero-inflated models. The goodness of fit of the models was also checked by RQR. The Q-Q plot of RQRs in Figure 7 reveals a lack of fit of all models except the FB model, which is not surprising, as we have already seen that the FB distribution fits left-skewed data better than other ZI distributions at the end of Section 4.

 \begin{figure}[htp]
     \centering
\includegraphics[width=0.45\linewidth]{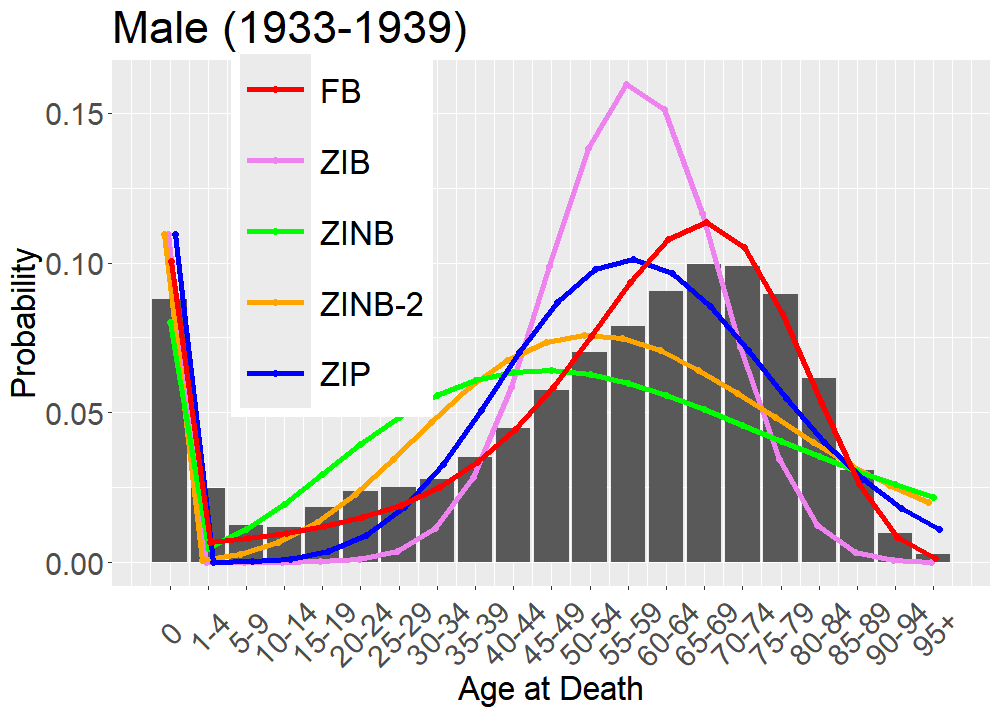}\hspace{8pt}\includegraphics[width=0.45\linewidth]{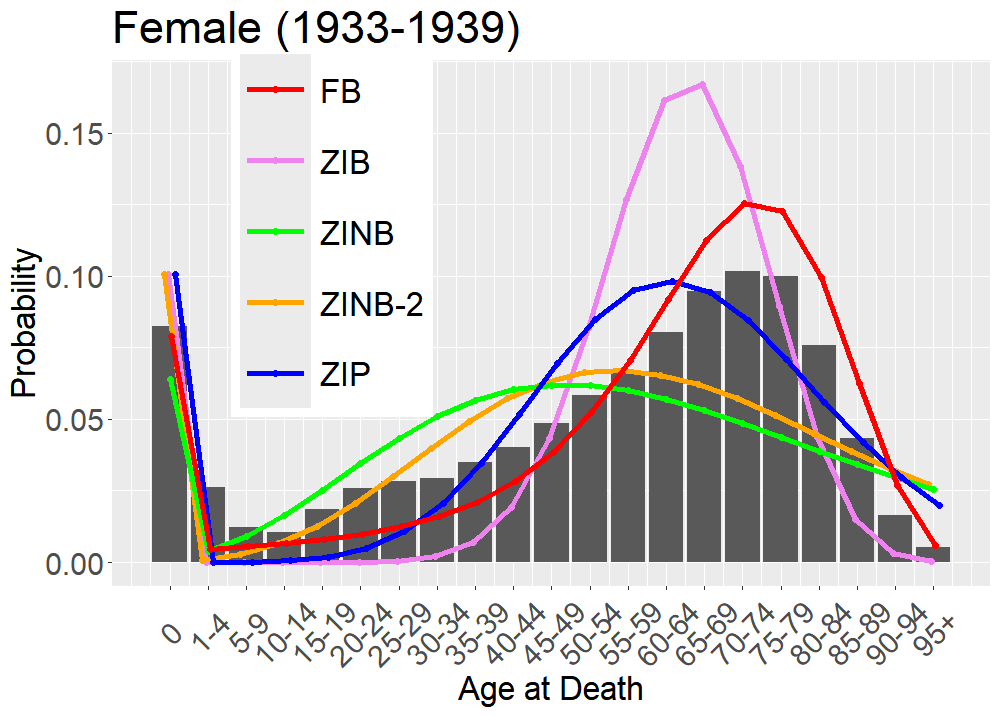}

\includegraphics[width=0.45\linewidth]{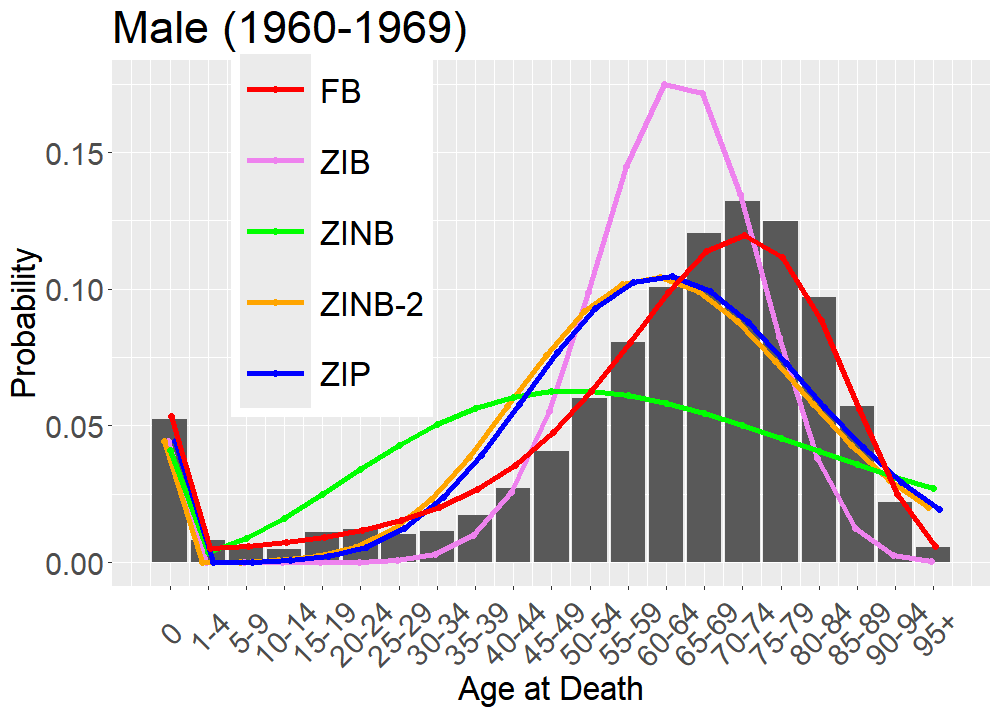}\hspace{8pt}\includegraphics[width=0.45\linewidth]{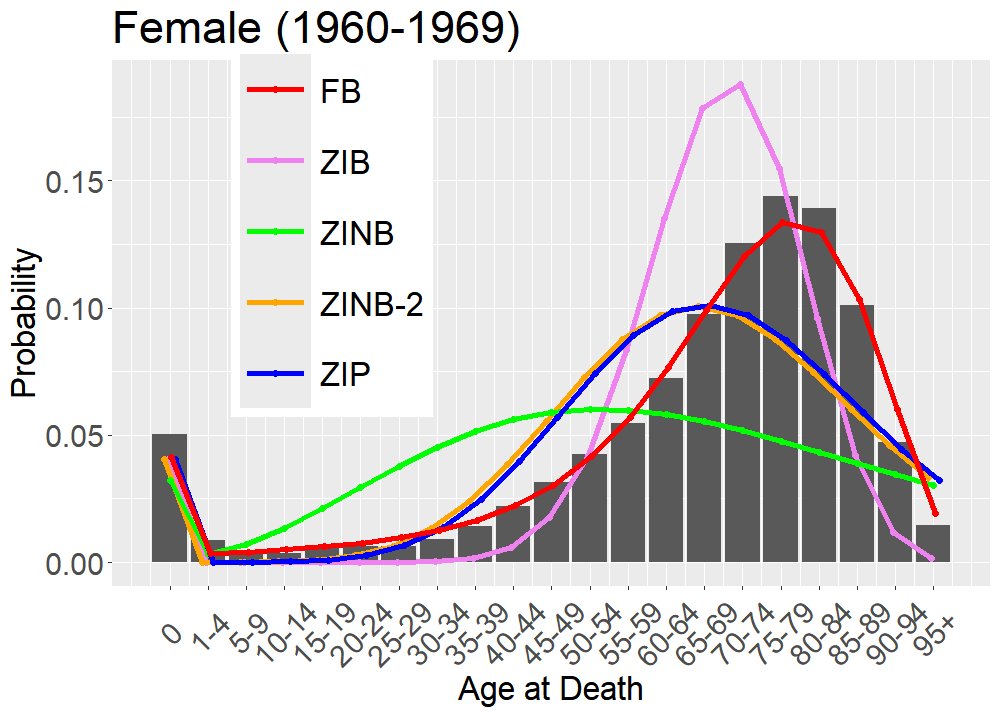}
     
     \includegraphics[width=0.45\linewidth]{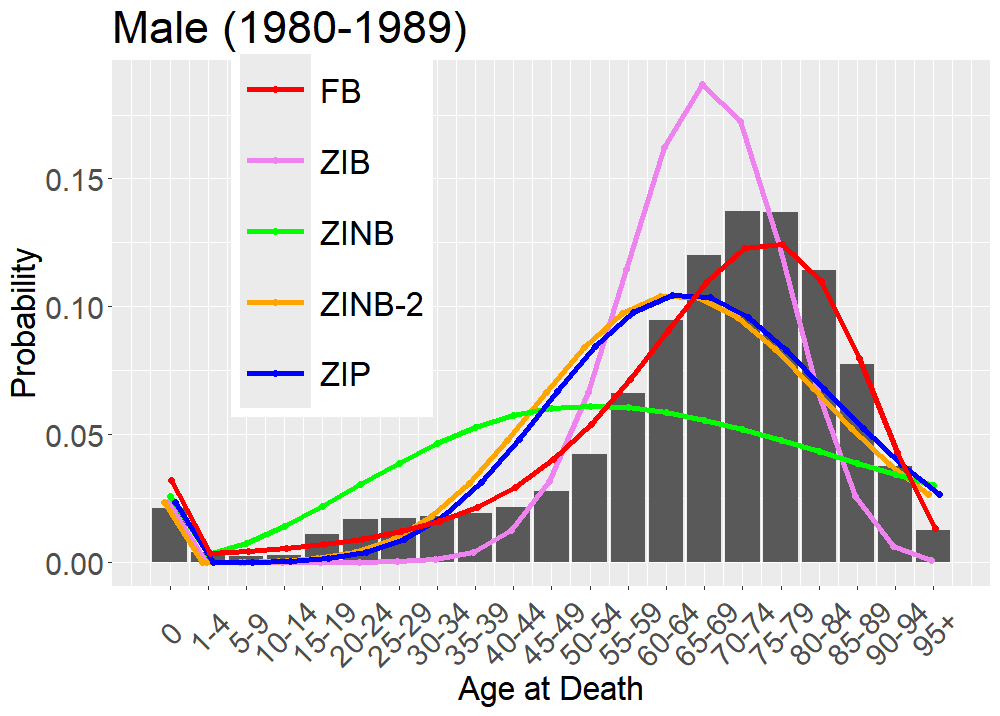}\hspace{8pt}\includegraphics[width=0.45\linewidth]{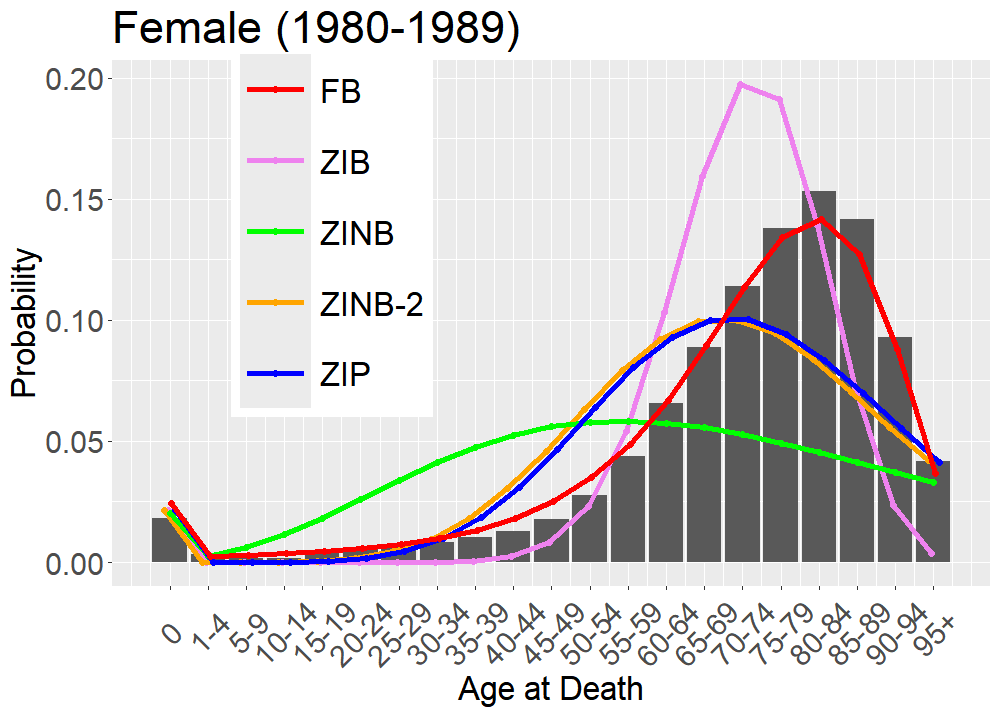}
     
     \includegraphics[width=0.45\linewidth]{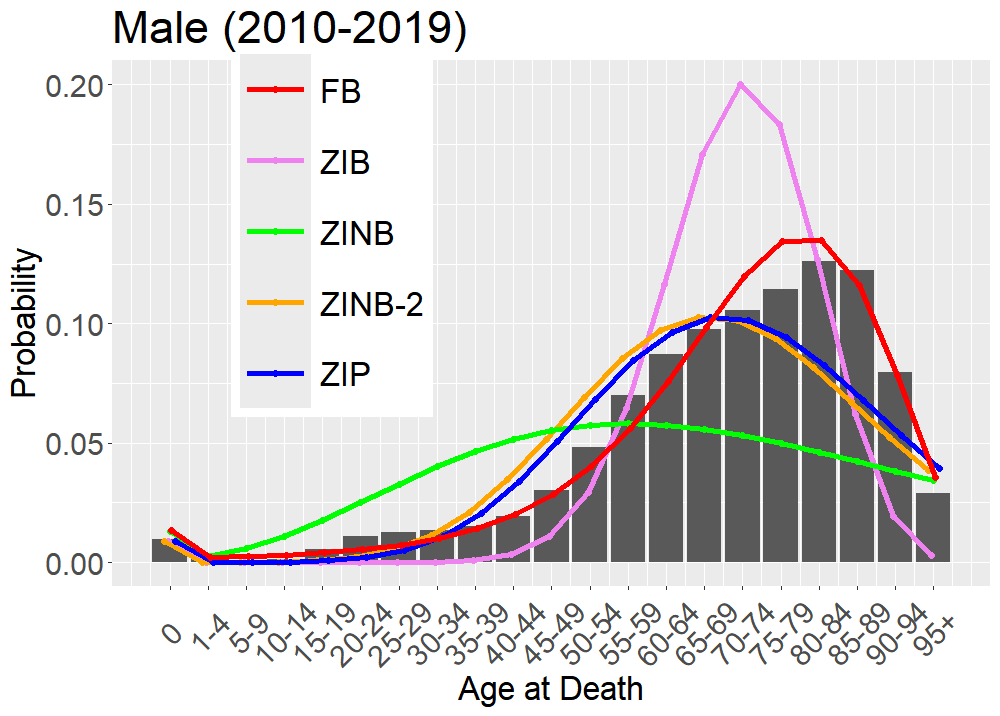}\hspace{8pt}\includegraphics[width=0.45\linewidth]{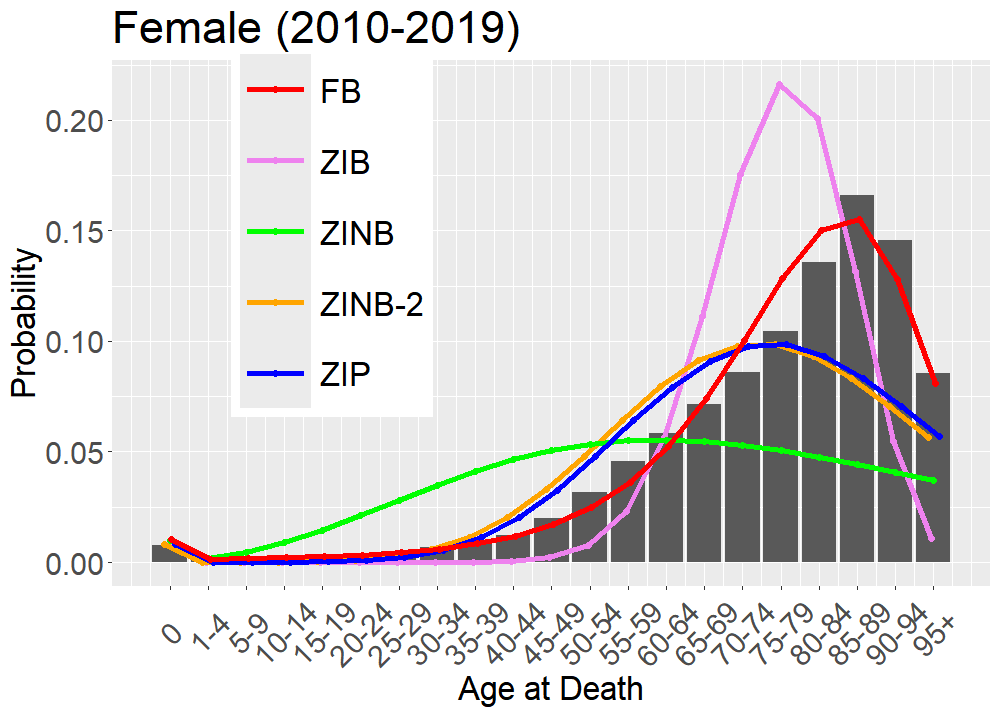}
     
     \caption{Data distribution overlaid with fitted probability distributions for selected years}
     \label{fig:enter-label}
 \end{figure}

\begin{figure}[htp]
    \centering
\includegraphics[width=0.32\linewidth]{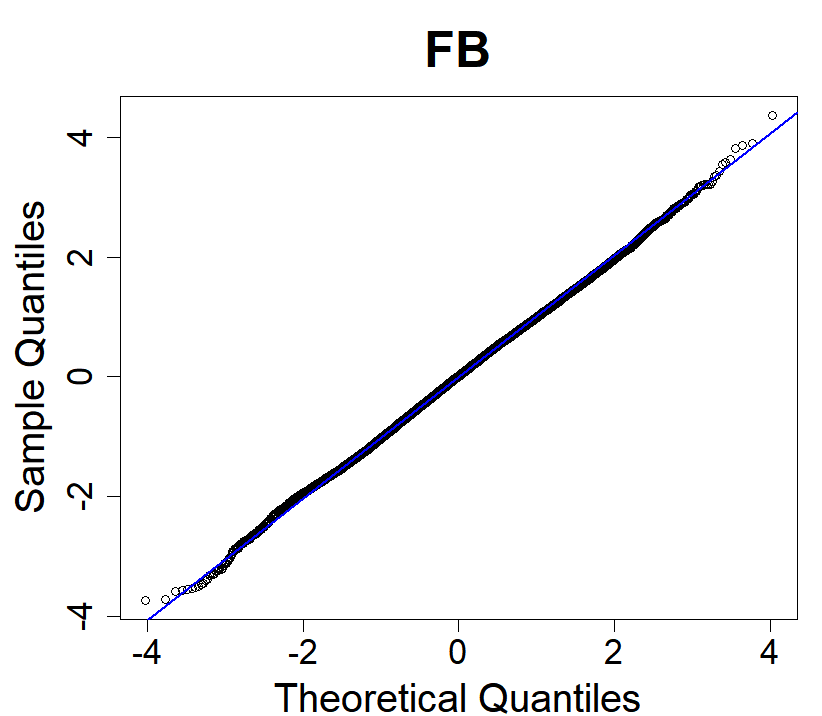}
\includegraphics[width=0.32\linewidth]{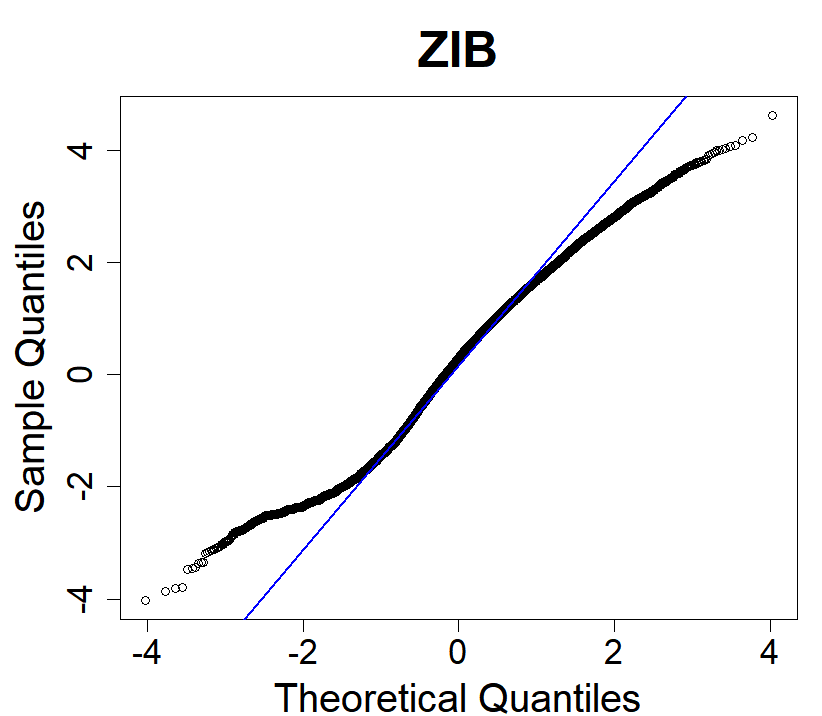}
\includegraphics[width=0.32\linewidth]{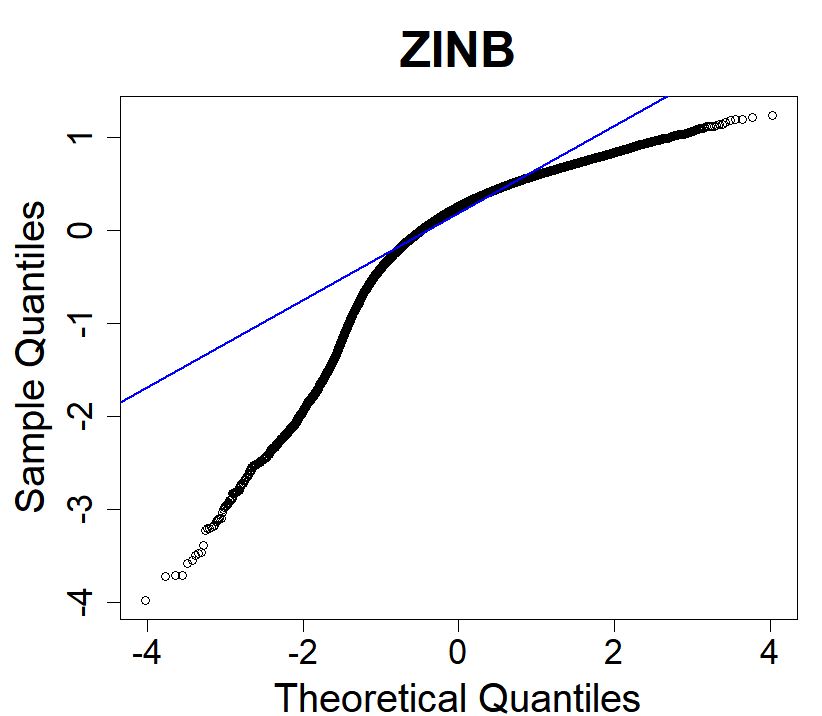}

\includegraphics[width=0.32\linewidth]{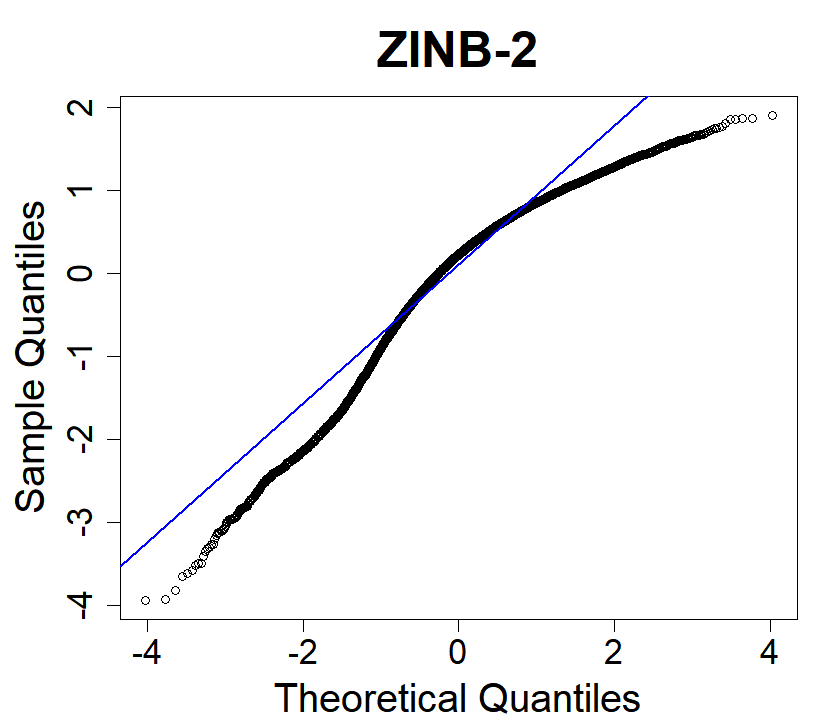}
\includegraphics[width=0.32\linewidth]{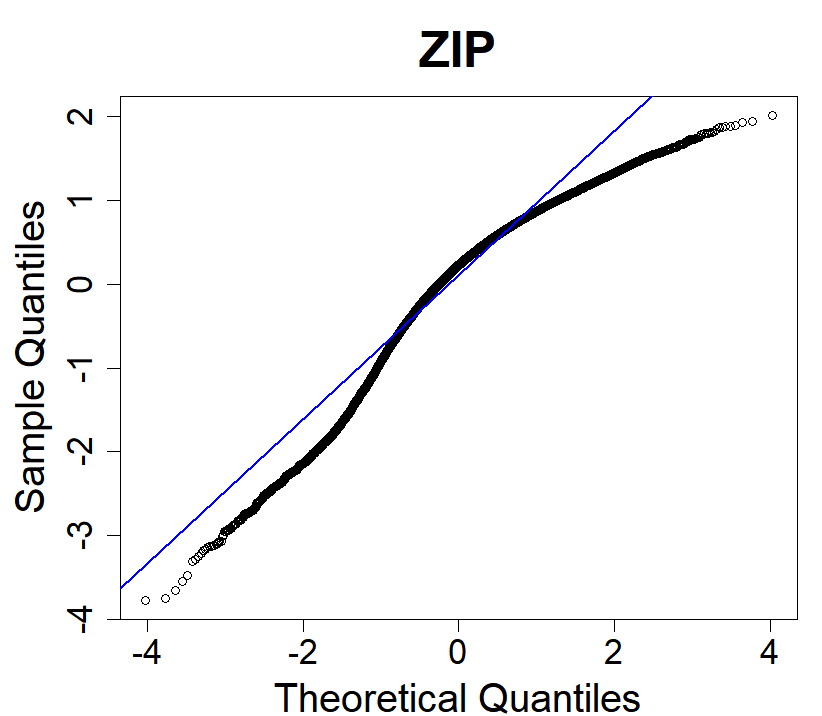}
    \caption{Q-Q plot for Randomized Quantile Residuals}
    \label{fig:enter-label}
\end{figure}

Here, we summarize the strengths and the weaknesses of the FB regression model. For a small-sized dataset, the FB model can be useful and perform as well as or better than other zero-inflated models. Especially for left-skewed count data with excess zeros, the FB model performs better than other existing ZI models. This is because the FB distribution is more versatile than other discrete distributions. However, the FB model is not suitable for a large dataset. It also suffers from computational instability in estimating the parameters. In this section, we used the numerical optimization method  ``Nelder-Mead" in R to find MLEs, which approximates the gradient of the objective function by the finite-difference calculations. We used 0 for the initial value of the parameters, but it was found that the estimates of the parameters changed with different initial values, and a convergence issue arose with some initial values.

It is known that the MLE for discrete data and the logistic regression model is notorious for not being robust in the presence of outliers, and various robust estimators have been developed \cite{robust5, robust1, robust4, robust3, robust2, robust6}. 
 For future work, it would be meaningful to study and apply these robust estimators to the FB model to alleviate computational challenges. For example, one can use the minimum distance estimators under the Cramer-von Mises distance  \cite{robust4} or the robust M-estimator based on the probability integral transformation  \cite{robust100}, which provides reliable results even when the data contains outliers.


\section{Discussion and conclusion}
We developed a new generalized linear model for count data that has many zeros. Our model utilizes the fractional binomial distribution that can serve as an over-dispersed, zero-inflated alternative to the regular binomial distribution. We used two datasets from agriculture and public health to fit zero-inflated regression models and FB regression models. The results show that the FB model is as versatile as or more versatile than other existing zero-inflated models to incorporate zero-inflation in count data. Especially for left-skewed count data, the FB model is found to be more appropriate than other models, as the FB model shows a good fit while other models show a lack of fit for such data.  

Although the FB regression model can serve as an additional tool for excess zero count data, its applicability is limited to a small size dataset with a few covariates (approximately fewer than 5) and a small or moderate bound on the response variable (approximately less than 20).
The FB model is computationally expensive and not suitable for a large data set since the pmf of the fractional binomial distribution does not have a closed-form expression, unlike Poisson and negative binomial distributions. The FB model also suffers from instability in numerical optimization for MLE. These drawbacks call for a different approach to parameter estimation and a scalable FB model for future work. 
\section*{Conflict of interest}
 The authors declare that they have no conflict of interest.
\section*{Data Sets and Computer Code}
R-code used in simulations and data analysis is available from GitHub repository \url{https://github.com/leejeo25/frbinom_regression.git} and
\url{https://github.com/leejeo25/fbglm.git}. 
Datasets used in this work are publicly available.

\normalem
\bibliographystyle{abbrv}
\bibliography{main}
\end{document}